\let\doendproof\endproof
\renewcommand\endproof{~\hfill$\qed$\doendproof}
\newcommand{\onestep}{\leftrightarrow}
\newcommand{\TS}{\mathsf{TS}}
\newcommand{\TJ}{\mathsf{TJ}}
\newcommand{\setU}{U}
\newcommand{\Uini}{\setU^\mathsf{s}}
\newcommand{\Utar}{\setU^\mathsf{t}}
\newcommand{\RISR}[1]{\mathrm{RISR}_{#1}}
\newcommand{\CRISR}[1]{\mathrm{CRISR}_{#1}}
\newenvironment{listing}[1]{%
        \begin{list}{*}{%
                 \settowidth{\labelwidth}{#1}%
                 \setlength{\leftmargin}{\labelwidth}%
                 \advance \leftmargin by 12pt
                   \setlength{\itemsep}{0pt}%
                   \setlength{\parsep}{0pt}%
                   \setlength{\topsep}{0pt}%
                   \setlength{\parskip}{0pt}%
}%
}{%
\end{list}}
\newcounter{one}
\newcounter{two}
\newcounter{three}
\newcounter{four}
\newcounter{five}
\definecolor{darkred}{rgb}{0.7,0,0}
\newtheorem{observation}[theorem]{Observation}
\def\yScale{0.6}
\def\xScale{2}
\definecolor{mikadoyellow}{rgb}{1.0, 0.77, 0.05}
\definecolor{limegreen}{rgb}{0.2, 0.8, 0.2}
\def\iniClr{mikadoyellow}
\def\tarClr{limegreen}
\tikzset{
    side by side/.style 2 args={
            draw=#1,
            line width=4pt,
            postaction={
                    clip, postaction={draw=#2}
                }
        }
}
\newcommand{\convexpath}[2]{
    [
            create hullnodes/.code={
                    \global\edef\namelist{#1}
                    \foreach [count=\counter] \nodename in \namelist {
                        \global\edef\numberofnodes{\counter}
                        \node at (\nodename) [draw=none,name=hullnode\counter] {};
                    }
                    \node at (hullnode\numberofnodes) [name=hullnode0,draw=none] {};
                    \pgfmathtruncatemacro\lastnumber{\numberofnodes+1}
                    \node at (hullnode1) [name=hullnode\lastnumber,draw=none] {};
                },
            create hullnodes
        ]
    ($(hullnode1)!#2!-90:(hullnode0)$)
    \foreach [
        evaluate=\currentnode as \previousnode using \currentnode-1,
        evaluate=\currentnode as \nextnode using \currentnode+1
    ] \currentnode in {1,...,\numberofnodes} {
            -- ($(hullnode\currentnode)!#2!-90:(hullnode\previousnode)$)
            let \p1 = ($(hullnode\currentnode)!#2!-90:(hullnode\previousnode) - (hullnode\currentnode)$),
            \n1 = {atan2(\y1,\x1)},
            \p2 = ($(hullnode\currentnode)!#2!90:(hullnode\nextnode) - (hullnode\currentnode)$),
            \n2 = {atan2(\y2,\x2)},
            \n{delta} = {-Mod(\n1-\n2,360)}
            in
                {arc [start angle=\n1, delta angle=\n{delta}, radius=#2]}
        }
    -- cycle
}
\begin{document}
\title{Reconfiguration of Regular Induced Subgraphs\thanks{%
Partially supported by JSPS KAKENHI Grant Numbers 
  JP18H04091, % Uehara A
  JP18K11168, % Otachi C, old
  JP18K11169, % Kiyomi C (Otachi)
  JP19K11814, % Ito C
  JP19K20350, % Wasa
  JP20H05793, % Ito Henkaku B, A01
  JP20K19742, % kobayashi
  JP21H03499, % kobayashi
  JP21K11752. % Otachi C, new
}
}

%\titlerunning{Abbreviated paper title}
% If the paper title is too long for the running head, you can set
% an abbreviated paper title here

\author{
Hiroshi Eto\inst{1}
\and
Takehiro Ito\inst{1}%\orcidID{0000-0002-9912-6898}
\and
Yasuaki Kobayashi\inst{2}%\orcidID{0000-0003-3244-6915}
\and
Yota Otachi\inst{3}%\orcidID{0000-0002-0087-853X}
\and
Kunihiro~Wasa\inst{4}%\orcidID{0000-0001-9822-6283}
}

\authorrunning{H. Eto et al.}
% First names are abbreviated in the running head.
% If there are more than two authors, 'et al.' is used.

\institute{%
%Kyushu University, Fukuoka, Japan\\
%Graduate School of Information Sciences, Tohoku University, Sendai, Japan\\
%\email{hiroshi.eto.b4@tohoku.ac.jp}
%
%\and
%
Graduate School of Information Sciences, Tohoku University, Sendai, Japan\\
\email{\{hiroshi.eto.b4,takehiro\}@tohoku.ac.jp}
\and
Graduate School of Informatics, Kyoto University, Kyoto, Japan\\
\email{kobayashi@iip.ist.i.kyoto-u.ac.jp}
\and
Nagoya University, Nagoya, Japan\\
\email{otachi@nagoya-u.jp}
\and
Toyohashi University of Technology, Toyohashi, Japan\\
\email{wasa@cs.tut.ac.jp}
}

% typeset the header of the contribution
\maketitle

\begin{abstract}
We study the problem of checking the existence of a step-by-step transformation of $d$-regular induced subgraphs in a graph, where $d \ge 0$ and each step in the transformation must follow a fixed reconfiguration rule. 
Our problem for $d=0$ is equivalent to \textsc{Independent Set Reconfiguration}, which is one of the most well-studied reconfiguration problems.
In this paper, we systematically investigate the complexity of the problem, in particular, on chordal graphs and bipartite graphs. 
Our results give interesting contrasts to known ones for \textsc{Independent Set Reconfiguration}. 
\keywords{Combinatorial reconfiguration \and Regular induced subgraph \and Computational complexity}
\end{abstract}

\section{Introduction}
Combinatorial reconfiguration~\cite{IDHPSUU11,vandenHeuvel13,Nishimura18} studies the reachability in the solution space formed by feasible solutions of an instance of a search problem. 
In a reconfiguration problem, we are given two feasible solutions of a search problem and are asked 
to determine whether we can modify one to the other by repeatedly applying a prescribed reconfiguration rule while keeping the feasibility.
Such problems arise in many applications and
studying them is important also for understanding the underlying problems deeper
~(see the surveys \cite{vandenHeuvel13,Nishimura18} and the references therein).

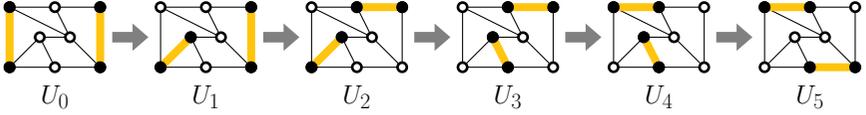
\begin{figure}[htb]
\centering
    \newcommand{\exampleBaseGraph}{
    \node at (0, 0) (v1) {};
    \node at (1.5, 0) (v2) {};
    \node at (3, 0) (v3) {};
    \node at (1, 1) (v4) {};
    \node at (2, 1) (v5) {};
    \node at (0, 2) (v6) {};
    \node at (1.5, 2) (v7) {};
    \node at (3, 2) (v8) {};
    \foreach \u/\v in {
            v1/v2, v1/v4, v1/v6,
            v2/v3, v2/v4,
            v3/v5, v3/v8,
            v4/v5,
            v5/v6, v5/v7,
            v6/v7,
            v7/v8} {
            \draw (\u) edge (\v);
        }

}

\def\exscale{0.4}
\begin{tikzpicture}[scale = \exscale,
        every node/.style = {scale=\exscale, circle, line width=1pt, draw=black, fill=white},
        token/.style = {draw=black, fill=black},
        matching edge/.style = {draw=\iniClr, line width=3pt},
        moved matching edge/.style = {draw=\iniClr, line width=2pt, dashed},
        tar matching edge/.style = {draw=\tarClr, line width=3pt}
    ]
    \def\shiftAmount{5}
    \foreach \i/\meu/\mev/\mfu\mfv in {
            0/v1/v6/v3/v8,
            1/v3/v8/v1/v4,
            2/v1/v4/v7/v8,
            3/v7/v8/v2/v4,
            4/v2/v4/v6/v7,
            5/v6/v7/v2/v3%
        }{
            \begin{scope}[shift={(\i*\shiftAmount, 0)}]
                \exampleBaseGraph
                \ifthenelse{\equal{\i}{0}}{}{
                \draw[-{Triangle[width=10pt,length=6pt]}, line width=4pt, gray] ($(v1)!0.5!(v6) - (1.6, 0)$) to ($(v1)!0.5!(v6) - (0.4, 0)$);
                }
                \draw (\meu) edge[matching edge] (\mev);
                \draw (\mfu) edge[matching edge] (\mfv);
                \foreach \u in {\meu, \mev, \mfu, \mfv} {
                        \node[token] at (\u) {};
                    }
                \Huge
                \node[draw=none, fill=none] at ($(v2) + (0, -1)$) {$\setU_\i$};
            \end{scope}
        }

\end{tikzpicture}
    \caption{Example of a $\TJ$-sequence for $d=1$, where each vertex in a $d$-regular set $\setU_i$ is colored with black. Under $\TS$, only $\setU_2 \onestep \setU_3$ holds.}
    \label{fig:ex}
\end{figure}

\subsection{Our problems}
In this paper, we take $d$-regular induced subgraphs in a graph as feasible solutions of the solution space.
Recall that a graph is \emph{$d$-regular} if every vertex in the graph is of degree $d$.
By the choice of $d$, we can represent some well-known graph properties: 
$d=0$ corresponds to an independent set of a graph $G$, and 
$d=1$ corresponds to an induced matching of $G$. 
If we require $d$-regular induced subgraphs to be connected, then $d=2$ corresponds to an induced cycle of $G$.  

We then define two reconfiguration rules on the $d$-regular induced subgraphs. 
Since we consider only induced subgraphs of a graph $G$, each feasible solution can be represented by a vertex subset $\setU$ of $G$. 
We denote by $G[\setU]$ the subgraph of $G$ induced by $\setU$. 
We say that a vertex subset $\setU$ of a graph $G$ is a \emph{$d$-regular set of $G$} if $G[\setU]$ is $d$-regular. 
Then, there are two well-studied reconfiguration rules~\cite{KaminskiMM12}, called Token Jumping rule ($\TJ$ rule for short) and Token Sliding rule ($\TS$ rule for short).\footnote{
There is another well-studied rule, Token Addition and Removal ($\mathsf{TAR}$)~\cite{KaminskiMM12}.
We are not going to consider this rule as it cannot keep $d$-regularity unless $d=0$.}  
Let $\setU$ and $\setU'$ be two $d$-regular sets of $G$.
Then, we write
\begin{listing}{a}
  \item[$\bullet$] \emph{$\setU \onestep \setU'$ under $\TJ$} if $|\setU \setminus \setU'| = |\setU' \setminus \setU| = 1$; and
  \smallskip

  \item[$\bullet$] \emph{$\setU \onestep \setU'$ under $\TS$} if $\setU \setminus \setU' = \{v\}$, $\setU' \setminus \setU = \{w\}$, and $vw \in E(G)$.
  \smallskip
\end{listing}
A sequence $\langle \setU_{0}, \setU_{1}, \ldots, \setU_{\ell} \rangle$ of $d$-regular sets of $G$ is called a \emph{reconfiguration sequence under} $\TJ$ (or $\TS$) between two $d$-regular sets $\setU_{0}$ and $\setU_{\ell}$ if $\setU_{i-1} \onestep \setU_{i}$ holds under $\TJ$ (resp., $\TS$) for all $i \in \{1, 2, \ldots, \ell\}$.
A reconfiguration sequence under $\TJ$ (or $\TS$) is simply called a \emph{$\TJ$-sequence} (resp., \emph{$\TS$-sequence}).
Note that all $d$-regular sets in the sequence have the same cardinality.

We now define the problem for a rule $\mathsf{R} \in \{\TJ, \TS\}$, as follows:
	\begin{center}
	\fbox{
		\parbox{0.9\hsize}{
            \textsc{$d$-Regular Induced Subgraph Reconfiguration} under $\mathsf{R}$ (abbreviated as $\RISR{d}$)
            \begin{listing}{\textbf{Question:}}
            \item[\textbf{Input:}] A graph $G$ and $d$-regular sets $\Uini$ and $\Utar$ of $G$.
            \item[\textbf{Question:}] Is there an $\mathsf{R}$-sequence between $\Uini$ and $\Utar$?
            \end{listing}
		}
	}
	\end{center}
A $d$-regular set $\setU$ of $G$ is \emph{connected} if $G[\setU]$ is connected. 
We also consider the following special case of $\RISR{d}$, which only allows connected sets as the initial and target $d$-regular sets:
	\begin{center}
	\fbox{
		\parbox{0.9\hsize}{
            \textsc{Connected $d$-Regular Induced Subgraph Reconfiguration} under $\mathsf{R}$ (abbreviated as $\CRISR{d}$)
            \begin{listing}{\textbf{Question:}}
            \item[\textbf{Input:}] A graph $G$ and connected $d$-regular sets $\Uini$ and $\Utar$ of $G$.
            \item[\textbf{Question:}] Is there an $\mathsf{R}$-sequence between $\Uini$ and $\Utar$?
            \end{listing}
		}
	}
	\end{center}
Although $\CRISR{d}$ does not explicitly ask intermediate $d$-regular sets to be connected,
it is actually forced by the $d$-regularity and connectivity of the initial set (see Section~\ref{sec:pre}).

% known results

\subsection{Known and related results}
Hanaka et al.~\cite{HanakaIMMNSSV20} introduced \textsc{Subgraph Reconfiguration}, which unifies several reconfiguration problems where feasible solutions are defined as
(induced or non-induced)
subgraphs in a graph satisfying a specific property. 
They considered several graph properties for defining feasible solutions, and one of their results shows that $\CRISR{2}$ is PSPACE-complete under $\TS$ and $\TJ$. 

As related work, Ito et al.~\cite{ItoOO15} introduced \textsc{Clique Reconfiguration}, which can be seen as a special case of $d$-regular induced subgraphs. 
The problem is solvable in polynomial time on even-hole-free graphs (and hence on chordal graphs) under $\TJ$ and $\TS$~\cite{ItoOO15}.
M{\"{u}}hlenthaler~\cite{Muhlenthaler15} proved that \textsc{Subgraph Reconfiguration} is solvable in polynomial time when (not necessarily induced) regular graphs are taken as feasible solutions. His result can be generalized to degree-constrained subgraphs where each vertex has lower and upper bounds for its degree.

\textsc{Independent Set Reconfiguration}, equivalent to $\RISR{0}$, is one of the most well-studied reconfiguration problems. 
$\RISR{0}$ under $\TJ$ is PSPACE-complete on perfect graphs~\cite{KaminskiMM12}, and is NP-complete on bipartite graphs~\cite{LokshtanovM19}, whereas it is solvable in polynomial time on even-hole-free graphs~\cite{KaminskiMM12}, claw-free graphs~\cite{BonsmaKW14}, and cographs~\cite{Bonsma16}.
On the other hand, $\RISR{0}$ under $\TS$ is PSPACE-complete on bipartite graphs~\cite{LokshtanovM19}, and on split graphs~\cite{BelmonteKLMOS21}. 

These precise complexity analyses of $\RISR{0}$ show interesting contrast with respect to the reconfiguration rules $\TS$ and $\TJ$. (See Table~\ref{tbl:summary}.)
On chordal graphs, tractability of $\RISR{0}$ depends on the choice of reconfiguration rules. 
On bipartite graphs, the complexity of $\RISR{0}$  shows arguably the most surprising behavior depending on the reconfiguration rules.
Lokshtanov and Mouawad~\cite{LokshtanovM19}
showed that, on bipartite graphs, 
$\RISR{0}$ is PSPACE-complete under $\TS$ 
but NP-complete under $\TJ$~\cite{LokshtanovM19}.
That is, $\RISR{0}$ is intractable under both rules but in different senses.

\begin{table}[t]
\centering
\caption{Summary of the results.}
\label{tbl:summary}
\begin{tabular}{c|l|l|l|l}
  & \multicolumn{2}{l|}{\cellcolor{lightgray!50} $\RISR{d}$}
  & \multicolumn{2}{l}{\cellcolor{lightgray!50} $\CRISR{d}$ ($d \ge 2$)}
  \\
  & \cellcolor{lightgray!25} $\TS$
  & \cellcolor{lightgray!25} $\TJ$
  & \cellcolor{lightgray!25} $\TS$
  & \cellcolor{lightgray!25} $\TJ$
  \\
  \hline
  \cellcolor{lightgray!50} \begin{tabular}[c]{@{}c@{}}constant\\ bandwidth\end{tabular}
  & \multicolumn{2}{c|}{PSPACE-c [Cor~\ref{cor:bandwidth_R}]}
  & \multicolumn{2}{c}{
         \begin{tabular}[c]{@{}c@{}}
        PSPACE-c \\
        $[$Thm~\ref{thm:CRISR-general}$]$
        \end{tabular}}
  \\ 
  \hline
  \cellcolor{lightgray!50} chordal
  &
  \begin{tabular}[c]{@{}l@{}}
    $d=0$: PSPACE-c \cite{BelmonteKLMOS21} \\
    $d \ge 1$: PSPACE-c \\
    ~~~~~~~~~~~~~~~[Thm~\ref{thm:RISR-chordal}]
  \end{tabular}   &
  \begin{tabular}[c]{@{}l@{}}
    $d=0$: P \cite{KaminskiMM12} \\
    $d \ge 1$: PSPACE-c \\
    ~~~~~~~~~~~~[Thm~\ref{thm:RISR-chordal}]
  \end{tabular} 
  & \multicolumn{2}{c}{P \cite{ItoOO15}}
  \\
  \hline
  \cellcolor{lightgray!50} bipartite 
  &
  \begin{tabular}[c]{@{}l@{}}
    $d=0$: PSPACE-c \cite{LokshtanovM19} \\
    $d \ge 1$: P [Obs~\ref{obs:RISR-bigraph-TS}]
  \end{tabular}
  &
  \begin{tabular}[c]{@{}l@{}}
    $d=0$: NP-c \cite{LokshtanovM19} \\
    $d \ge 1$: PSPACE-c \\
    ~~~~~~~~~~~~[Thm \ref{them:RISR-bigraph-TJ}]
  \end{tabular}
  & P [Obs~\ref{obs:RISR-bigraph-TS}]
  & 
  \begin{tabular}[c]{@{}c@{}}
    PSPACE-c \\
    $[$Thm \ref{thm:CRISR-bigraph-TJ}$]$
  \end{tabular}
\end{tabular}
\end{table}

% our results
\subsection{Our contribution}
In this paper, we investigate the complexity of $\RISR{d}$ and $\CRISR{d}$ systematically. 
In particular, we focus on chordal graphs and bipartite graphs, where the complexity of $\RISR{0}$ (i.e., \textsc{Independent Set Reconfiguration}) is known to show interesting behavior depending on the reconfiguration rules. 
Our results are summarized in Table~\ref{tbl:summary} together with known results for $\RISR{0}$.

We note that our results for $d \ge 1$ give interesting contrasts to known ones for $d=0$.
On chordal graphs, both $\TS$ and $\TJ$ rules have the same complexity for $d \ge 1$, whereas they are different for $d=0$.
%: the complexity of $\RISR{d}$ on chordal graphs for $d \ge 1$ becomes PSPACE-complete under $\TS$ and $\TJ$. 
%
On bipartite graphs, the complexity of $\RISR{d}$ for $d \ge 1$ shows a kind of reverse phenomenon from $d=0$: the problem becomes easier (indeed, becomes a trivial case as in Observation~\ref{obs:RISR-bigraph-TS}) under $\TS$, but becomes harder under $\TJ$.

\section{Preliminaries}
\label{sec:pre}

In this paper, we only consider simple and undirected graphs.
Let $G = (V,E)$ be a graph.
We sometimes denote by $V(G)$ and $E(G)$ the vertex and edge sets of $G$, respectively. 
For a vertex subset $\setU$ of $G$, we denote by $G[\setU]$ the subgraph of $G$ induced by $\setU$.
We say that $\setU$ is \emph{connected} if $G[\setU]$ is connected. 

In the introduction, we have defined the problems $\RISR{d}$ and $\CRISR{d}$ under $\mathsf{R}$ for a rule $\mathsf{R} \in \{\TJ, \TS\}$. 
We denote by $\langle G, \Uini, \Utar \rangle$ an instance of these problems. 
Since the problems clearly belong to PSPACE, we will only show PSPACE-hardness for proving PSPACE-completeness.

Note that although $\CRISR{d}$ only asks the input sets $\Uini$ and $\Utar$ to be connected,
it is actually required that all sets in a reconfiguration sequence are connected
by Lemma~\ref{lem:isomorphism} below.
\begin{lemmarep}
\label{lem:isomorphism}
Let $\setU$ and $\setU'$ be $d$-regular sets of a graph $G$.
If $\setU \onestep \setU'$ under $\TJ$ or $\TS$,
then $G[\setU]$ and $G[\setU']$ are isomorphic.
\end{lemmarep}
\begin{proof}
Since the adjacency under $\TS$-sequence is a stronger assumption than the one under $\TJ$-sequence,
it suffices to prove the statement under $\TJ$.
If $d = 0$, then the statement is trivial as both $\setU$ and $\setU'$
independent sets of the same size.
In the following, we assume that $d \ge 1$.

Let $u, v$ be vertices such that $\setU \setminus \{u\} = \setU' \setminus \{v\}$.
Since $\setU$ and $\setU'$ are $d$-regular and $\setU \setminus \{u\} = \setU' \setminus \{v\}$, we have $\setU \setminus N[u] = \setU' \setminus N[v]$ and 
$\setU \cap N(u) = \setU' \cap N(v)$.
Thus, a function $f: \setU \to \setU'$ defined by setting $f(w) = w$ for $w \in \setU \setminus \{u\}$ and $f(u) = v$ is an isomorphism between $\setU$ and $\setU'$. 
\end{proof}

For a positive integer $k$, we define $[k] = \{d \in \mathbb{Z} \mid 1 \le d \le k\}$.
For an $n$-vertex graph $G = (V,E)$, the \emph{width} of a bijection $\pi \colon V \to [n]$ is $\max_{\{u,v\} \in E} |\pi(u) - \pi(v)|$.
The \emph{bandwidth} of $G$ is the minimum width over all bijections $\pi \colon V \to [n]$.
The graphs of constant bandwidth are quite restricted in the sense that
the bandwidth of a graph is an upper bound of pathwidth (and of treewidth)~\cite{SorgeW19}.
For example, a path has bandwidth $1$ and a cycle has bandwidth $2$.

For a positive integer $t$, a graph $H$ is a \emph{$t$-sketch} of a graph $G$
if there exists a mapping $f \colon V(G) \to V(H)$ such that 
$|f^{-1}(v)| \le t$ for every $v \in V(H)$ and
$\{u,v\} \in E(G)$ implies either $f(u) = f(v)$ or $\{f(u), f(v)\} \in E(H)$.
The following lemma is useful for bounding the bandwidth of a graph.
\begin{lemmarep}
\label{lem:bw}
Let $H$ be a graph of bandwidth at most $b$.
If $H$ is a $t$-sketch of $G$, then the bandwidth of $G$ is at most $t(b+1)$.
\end{lemmarep}
\begin{proof}
Let $f \colon V(G) \to V(H)$ be a mapping in the definition of $t$-sketch.
Let $\pi_{H} \colon V(H) \to [|V(H)|]$ be a bijection of width at most $b$.
Let $\pi_{G} \colon V(G) \to [|V(G)|]$ be a bijection such that for all distinct $u,v \in V(G)$,
$\pi_{G}(u) < \pi_{G}(v)$ implies $\pi_{H}(f(u)) \le \pi_{H}(f(v))$.
We show that $\pi_{G}$ has width at most $t(b+1)$.

Let $\{u,v\} \in E(G)$. Assume by symmetry that $\pi_{G}(u) < \pi_{G}(v)$.
If $f(u) = f(v)$, then $f(w) = f(u)$ ($=f(v)$) holds for all $w \in V(G)$ with $\pi_{G}(u) \le \pi_{G}(w) \le \pi_{G}(v)$,
and thus $\pi_{G}(v) - \pi_{G}(u) < t$.
Next assume that $\pi_{H}(f(u)) < \pi_{H}(f(v))$.
Then, $\pi_{H}(f(u)) \le \pi_{H}(f(w)) \le \pi_{H}(f(v))$ holds for all $w \in V(G)$ with $\pi_{G}(u) \le \pi_{G}(w) \le \pi_{G}(v)$.
Since $\pi_{H}(f(v)) - \pi_{H}(f(u)) \le b$
and $|f^{-1}(v)| \le t$ for every $v \in V(H)$, it holds that
$\pi_{G}(v) - \pi_{G}(u) < t(b+1)$.
\end{proof}

\section{Complexity of $\RISR{}$ and $\CRISR{}$}

In this section, we show that $\RISR{}$ and $\CRISR{}$ are intractable 
on the classes of chordal graphs, bipartite graphs, and graphs of bounded bandwidth.
We also observe that some cases are tractable.

Observe that a connected $0$-regular graph is a single vertex
and a connected $1$-regular graph is a single edge.
This implies that, for $d \le 1$, $\CRISR{d}$ is polynomial-time solvable under $\TJ$ and $\TS$.
Hence, we only consider the case of $d \ge 2$ for $\CRISR{d}$.
On the other hand, we cannot put such an assumption for the more general problem $\RISR{d}$
as $\RISR{0}$ is PSPACE-complete under $\TJ$ and $\TS$~\cite{KaminskiMM12}.

\subsection{Chordal graphs}

For chordal graphs, let us  first consider $\CRISR{d}$, which is actually easy.
Since every connected regular induced subgraph in a chordal graph is a complete graph~\cite{AsahiroEIM14},
$\CRISR{d}$ on chordal graphs can be solved by using 
a linear-time algorithm for \textsc{Clique Reconfiguration} on chordal graphs~\cite{ItoOO15}.
Also, we can use the same algorithm for the general $\RISR{d}$ with $d \ge 1$ on split graphs
since a split graph can have at most one nontrivial connected component.
\begin{observation}
\label{obs:RISR-split}
$\CRISR{d}$ with $d \ge 0$ on chordal graphs 
and $\RISR{d}$ with $d \ge 1$ on split graphs 
can be solved in linear time.
\end{observation}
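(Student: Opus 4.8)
The plan is to reduce both halves of the statement to \textsc{Clique Reconfiguration} on chordal graphs, which is solvable in linear time by Ito et al.~\cite{ItoOO15}. The bridge is the structural result of Asahiro et al.~\cite{AsahiroEIM14}: a connected $d$-regular induced subgraph of a chordal graph is a complete graph. Hence a connected $d$-regular set of a chordal graph is precisely a clique on $d+1$ vertices, and a $\TJ$ (resp.\ $\TS$) step between two such cliques is exactly a $\TJ$ (resp.\ $\TS$) step of \textsc{Clique Reconfiguration} with the clique size fixed to $k := d+1$.

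For $\CRISR{d}$ on a chordal graph $G$, I would first argue that an $\mathsf{R}$-sequence between the connected sets $\Uini$ and $\Utar$ consists of connected sets only. This follows by induction from Lemma~\ref{lem:isomorphism}: if $\setU \onestep \setU'$, then $G[\setU]$ and $G[\setU']$ are isomorphic and therefore have the same number of connected components, so connectivity is preserved once the first set is connected. Combined with~\cite{AsahiroEIM14}, every set in such a sequence is a clique of size $d+1$; conversely, any \textsc{Clique Reconfiguration} sequence between $\Uini$ and $\Utar$ (both being $(d+1)$-cliques, again by~\cite{AsahiroEIM14}) is a valid $\CRISR{d}$ sequence. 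So the two reachability questions coincide and the linear-time algorithm of~\cite{ItoOO15} applies; the degenerate cases $d \le 1$ have already been dealt with, and empty sets are trivial since cardinality is invariant along a sequence.

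For $\RISR{d}$ on a split graph $G$ with $d \ge 1$, I would show that \emph{every} nonempty $d$-regular set of $G$ is automatically connected, which removes the connectivity hypothesis and lets us fall back on the previous case (split graphs being chordal). The key observation to state and justify is that an induced subgraph of a split graph is again a split graph, and a split graph has at most one connected component that is not an isolated vertex: all clique-side vertices lie in a single component, and every non-isolated independent-side vertex attaches to it. Since a $d$-regular graph with $d \ge 1$ has no isolated vertex, $G[\setU]$ is connected. Thus on split graphs $\RISR{d}$ and $\CRISR{d}$ have the same yes-instances for $d \ge 1$, and~\cite{AsahiroEIM14} together with~\cite{ItoOO15} again gives a linear-time algorithm (with the trivial empty-set conventions as above).

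There is no deep obstacle here; the care needed is bookkeeping. I must verify the faithful two-way correspondence between reconfiguration steps in the $d$-regular-set formulation and in \textsc{Clique Reconfiguration} with prescribed clique size $k = d+1$ under \emph{both} $\TJ$ and $\TS$, check that the cited algorithm handles this fixed-size variant (it does), apply the lemma of~\cite{AsahiroEIM14} only to connected $d$-regular sets, and make the ``at most one nontrivial component'' property of (induced subgraphs of) split graphs explicit, since that is exactly what powers the ``connectivity is automatic'' step in the split-graph case.
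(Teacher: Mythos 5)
Your proposal is correct and follows essentially the same route as the paper: it invokes the result of Asahiro et al.~\cite{AsahiroEIM14} that connected regular induced subgraphs of chordal graphs are cliques, uses Lemma~\ref{lem:isomorphism} to keep intermediate sets connected, reduces to \textsc{Clique Reconfiguration} solved in linear time by Ito et al.~\cite{ItoOO15}, and handles split graphs via the ``at most one nontrivial component'' property. The extra bookkeeping you spell out (step-by-step correspondence, empty-set conventions) is exactly what the paper leaves implicit.
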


Now let us turn our attention to the general problem $\RISR{d}$ on chordal graphs.
When $d=0$, the complexity depends on the reconfiguration rules.
On chordal graphs, $\RISR{0}$ is PSPACE-complete under $\TS$~\cite{BelmonteKLMOS21},
while it is trivially polynomial-time solvable under $\TJ$ as it does not have a no-instance~\cite{KaminskiMM12}.
For $d \ge 1$, we show that the problem is PSPACE-complete under both rules.
\begin{theorem}
\label{thm:RISR-chordal}
For every constant $d \geq 1$,
$\RISR{d}$ is PSPACE-complete on chordal graphs under $\TJ$ and $\TS$.
\end{theorem}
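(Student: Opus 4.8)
The plan: membership in PSPACE is immediate (see Section~\ref{sec:pre}), so I would prove PSPACE-hardness separately for each rule $\mathsf{R}\in\{\TJ,\TS\}$ by reducing from $\RISR{0}$ under $\mathsf{R}$, which is PSPACE-complete on general graphs for both rules~\cite{KaminskiMM12}. Given an instance $\langle G,\Uini,\Utar\rangle$ of $\RISR{0}$ under $\mathsf{R}$ with $|\Uini|=|\Utar|=k$, I would build in polynomial time a \emph{chordal} graph $G'$ together with $d$-regular sets of $G'$ of size $k(d+1)$ representing $\Uini$ and $\Utar$, so that the two instances are equivalent.

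The basic idea is to turn each token into a clique of size $d+1$: a vertex $v$ of $G$ is represented by a block $B_v\cong K_{d+1}$, and the intended $d$-regular sets of $G'$ are the sets $\setU_S:=\bigcup_{v\in S}B_v$ for independent sets $S$ of $G$ with $|S|=k$. Two requirements must then be met at once. (i) A single token move of $G$ must be mimicked by a single $\mathsf{R}$-move of $G'$, so that a reconfiguration sequence never steps outside the family of $d$-regular sets; I would arrange this by letting ``consecutive'' block positions share all but one of their vertices, so that moving a block amounts to replacing a single vertex --- following the edges of $G$ for $\TS$, and following an auxiliary switching sub-structure of dummy block positions to realise the all-pairs moves of $\TJ$. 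Here it helps that if two blocks share $d$ vertices, then the shared vertices get degree $d+1$ the moment both blocks are fully selected; hence ``overlapping block positions cannot be selected simultaneously'' comes for free, and the routing is laid out so that two positions overlap exactly when their simultaneous selection must be forbidden --- which is precisely the independence constraint of $\RISR{0}$. (ii) $G'$ must be chordal even though the conflict pattern we encode is essentially $G$, which need not be chordal; I would obtain this by funnelling all interaction between distinct tokens through a small number of large ``backbone'' cliques to which every block is attached, so that any cycle of $G'$ meeting two distinct blocks is forced to pass through a backbone clique and therefore has a chord between two of its backbone vertices. Chordality then follows from a short case analysis on where a chordless cycle of $G'$ could possibly lie.

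For correctness I would establish two facts. (a) \emph{No spurious solutions:} every $d$-regular set of $G'$ of size $k(d+1)$ equals $\setU_S$ for some independent set $S$ of $G$ with $|S|=k$; this is a degree-counting argument that rules out sets slicing a block or only partially occupying the backbone/switching structure. (b) \emph{Step correspondence:} for such $S$ and $S'$, $\setU_S\onestep\setU_{S'}$ under $\mathsf{R}$ in $G'$ if and only if $S\onestep S'$ under $\mathsf{R}$ in $G$. Combining (a) and (b) yields that $\langle G,\Uini,\Utar\rangle$ is a yes-instance of $\RISR{0}$ under $\mathsf{R}$ if and only if its image under the reduction is a yes-instance of $\RISR{d}$ under $\mathsf{R}$; feeding the construction with a $\TJ$-hard instance and with a $\TS$-hard instance of $\RISR{0}$ then gives the two parts of the theorem, the $\TJ$ part being noteworthy since $\RISR{0}$ under $\TJ$ is trivial on chordal graphs.

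The main obstacle is exactly reconciling (i) and (ii): realising every atomic move of $G$ as a single $\mathsf{R}$-move while at the same time keeping $G'$ chordal and admitting no spurious $d$-regular sets, all with one construction that works uniformly for every constant $d\ge1$ and under both rules. I expect the backbone-clique design to be what makes chordality and the move simulation compatible, and the no-spurious-solutions claim to then be a careful but routine degree calculation.
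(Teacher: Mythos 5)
Your overall plan---blow each token up into a $(d+1)$-clique and reduce from \textsc{Independent Set Reconfiguration}---is in the right spirit, but as written it has a genuine gap, and the gap sits exactly where you admit it does: the ``backbone cliques'' and the ``switching sub-structure of dummy block positions'' that are supposed to make $G'$ chordal, simulate moves, and exclude spurious $d$-regular sets are never constructed, and the design constraints you impose on them are mutually inconsistent. First, your step correspondence (b) (one move of $G$ equals one move of $G'$) forces the blocks of two positions that can be reached from one another in a single move to share exactly $d$ of their $d+1$ vertices, and you also want ``overlap iff simultaneous selection forbidden'' to encode adjacency in $G$. These two requirements cannot both hold for an arbitrary conflict graph: if $d\ge 2$ and a vertex $u$ has two non-adjacent neighbours $v,w$, then $B_v$ and $B_w$ each share $d$ of the $d+1$ vertices of $B_u$, hence they share at least $d-1\ge 1$ vertices, and a shared vertex has degree $>d$ when both blocks are selected---so $\{v,w\}$ becomes forbidden even though it is independent in $G$, breaking the equivalence. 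Second, your $\TJ$ routing through dummy block positions directly contradicts your own correctness claims: the dummy positions yield $d$-regular sets of size $k(d+1)$ that are not of the form $\setU_S$ (so (a) fails), and a single jump in $G$ becomes several moves in $G'$ (so the ``iff'' in (b) fails). Third, since $\RISR{0}$ is PSPACE-hard on graphs that are far from chordal, the chordality of $G'$ really does have to be manufactured by gadgets, and no argument is given that this can be done without destroying the move simulation; this is not a routine verification one can defer.

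For contrast, the paper avoids all of these difficulties by choosing a different source problem: \textsc{Independent Set Reconfiguration} under $\TS$ is already PSPACE-complete on chordal (indeed split) graphs~\cite{BelmonteKLMOS21}, so one can simply replace every vertex of the chordal input graph $H$ by $d+1$ true twins; adding true twins preserves chordality, so no backbone or switching gadget is needed. Moreover, the paper does not insist on a one-step-to-one-step correspondence: a single slide in $H$ is simulated by $d+1$ consecutive slides in $G$, and in the converse direction a $\TJ$-sequence is first converted into a $\TS$-sequence using the twin structure and then projected back to $H$ by tracking one representative token per clique. If you want to salvage your approach, the cleanest fix is to adopt this choice of starting problem (hardness of $\RISR{0}$ under $\TS$ on chordal graphs) and drop the exact step correspondence; reducing from $\RISR{0}$ under $\TJ$ on general graphs is also conceptually problematic because, as the paper notes, $\RISR{0}$ under $\TJ$ has no no-instances on chordal graphs, so any faithful encoding of a general $\TJ$-hard instance inside a chordal graph must distort the move structure rather than preserve it move-for-move.
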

\begin{proof}
We give a polynomial-time reduction from \textsc{Independent Set Reconfiguration} on chordal graphs under $\TS$.
Let $\langle H, I, I' \rangle$ be an instance of \textsc{Independent Set Reconfiguration} under $\TS$ on chordal graphs.
From this instance, we construct an instance $\langle G, \Uini, \Utar \rangle$ of $\RISR{d}$ under $\TJ$ and $\TS$ on chordal graphs as follows.
For each $v \in V(H)$, we take a set $X_{v}$ of $d+1$ vertices.
We set $V(G) = \bigcup_{v \in V(H)} X_{v}$.
Each $X_{v}$ is a clique in $G$.
We add all possible edges between $X_{u}$ and $X_{v}$ if $\{u,v\} \in E(H)$.
We set $\Uini = \bigcup_{v \in I} X_{v}$ and $\Utar = \bigcup_{v \in I'} X_{v}$.
This reduction can be seen as repeated additions of true twin vertices,\footnote{%
Two vertices $u,v$ are \emph{true twins} if $N[u] = N[v]$.}
which do not break the chordality, and thus $G$ is chordal.

We now show that
$\langle H, I, I' \rangle$ is a yes-instance of \textsc{Independent Set Reconfiguration}
under $\TS$
if and only if
$\langle G, \Uini ,\Utar \rangle$ is a yes-instance of $\RISR{d}$ under $\TJ$ and $\TS$.

%=>
To prove the only-if direction, we assume that there exists a $\TS$-sequence $\langle 
I_0, \dots, I_\ell \rangle$ between $I$ and $I'$. 
For each $i$ with $0 \leq i \leq \ell$, 
we set $R_{i} = \bigcup_{v \in I_i} X_v$.
Observe that each $R_{i}$ is a $d$-regular set, $R_0 = \Uini$, and $R_{\ell} =\Utar$.
Hence, it suffices to show that there is a $\TS$-sequence (and thus a $\TJ$-sequence as well) from $R_{i}$ to $R_{i+1}$ for all $1 \le i \le \ell-1$.
Let $I_{i} \setminus I_{i+1} = \{p\}$ and $I_{i+1} \setminus I_{i} = \{q\}$,
and thus $R_{i} \setminus R_{i+1} = X_{p}$ and $R_{i+1} \setminus R_{i} = X_{q}$.
Let $X_{p} = \{p_{h} \mid 1 \le h \le {d+1}\}$ and $X_{q} = \{q_{h} \mid 1 \le h \le {d+1}\}$.
We set $T^{0} = R_{i}$, and for each $j$ with $1 \le j \le d+1$, we define
$T^{j} = T^{j-1} \setminus \{p_j\} \cup \{q_j\}$.
Observe that each $T^{j}$ is $d$-regular and $T^{d+1}=R_{i+1}$.
We can see that the exchanged vertices $p_{j}$ and $q_{j}$ are adjacent in $G$
since $p$ and $q$ are adjacent in $H$.
This implies that $\langle T^{0}, \dots, T^{d+1}\rangle$
is a $\TS$-sequence between $R_{i}$ and $R_{i+1}$.

%<=

To show the if direction, 
we first show that the existence of a $\TJ$-sequence between $\Uini$ and $\Utar$
implies the existences of a $\TS$-sequence between them as well.
This allows us to start the proof of this direction
with the stronger assumption of having a $\TS$-sequence.

Let $\langle R_{0}, \dots, R_{\ell} \rangle$ be a $\TJ$-sequence between $\Uini$ and $\Utar$.
For each $i \in [\ell]$, 
we show that there is a $\TS$-sequence between $R_{i-1}$ and $R_{i}$.
Let $\{u\} = R_{i-1} \setminus R_{i}$ and $\{v\} = R_{i} \setminus R_{i-1}$.
If $\{u,v\} \in E(G)$, then $R_{i-1} \onestep R_{i}$ under $\TS$.
Assume that $\{u,v\} \notin E(G)$.
Let $w \in R_{i-1} \cap R_{i}$ be a common neighbor of $u$ and $v$.
Such a vertex exists since $d \ge 1$.
Let $z$ be the vertex of $H$ such that $w \in X_{z}$.
Observe that $u,v \notin X_{z}$ as $\{u,v\} \notin E(G)$.
Now $X_{z} \not\subseteq R_{i-1} \cap R_{i}$ ($= R_{i-1} \setminus \{u\}$) holds since otherwise
$u$ has at least $|X_{z}|=d+1$ neighbors in $G[R_{i-1}]$ as $X_{z}$ is a set of true twins.
Since $u, v \notin X_{z}$, it holds that $X_{z} \not\subseteq R_{i-1} \cup R_{i}$.
Thus there exists $w' \in X_{z}$ that does not belong to $R_{i-1} \cup R_{i}$.
We claim that 
$R_{i-1} \onestep (R_{i-1} \setminus \{u\} \cup \{w'\}) = (R_{i-1} \setminus \{v\} \cup \{w'\}) \onestep R_{i}$ under $\TS$.
We only need to show that the intermediate set $R_{i-1} \setminus \{u\} \cup \{w'\}$ is $d$-regular.
By Lemma~\ref{lem:isomorphism} and the definition of $R_{i-1}$,
each connected component in $G[R_{i-1}]$ is a $(d+1)$-clique.
Let $C$ be the $(d+1)$-clique $G[R_{i-1}]$ that includes $u$ and $w$.
Since $w$ and $w'$ are true twins, the $(d+1)$-clique $C \setminus \{u\} \cup \{w'\}$ is a connected component of $G[R_{i-1} \setminus \{u\} \cup \{w'\}]$.
Thus, $R_{i-1} \setminus \{u\} \cup \{w'\}$ is $d$-regular.

Now assume that there is 
a $\TS$-sequence $\langle R_0, R_1, \dots, R_{\ell} \rangle$ between $\Uini$ and $\Utar$.
From this sequence, we construct a sequence 
$\langle S_{0}, S_{1}, \dots, S_{\ell}\rangle$ of independent sets in $G$ as follows.
Recall that $R_{0} = \bigcup_{v \in I_{0}} X_{v}$.
To construct $S_{0}$, we pick one arbitrary vertex from $X_{v}$ for each $v \in I_{0}$.
That is, $S_{0}$ is a set such that $|S_{0}| = |I|$
and $|S_{0} \cap X_{v}| = 1$ for each $v \in I_{0}$.
For $0 \leq i \leq \ell-1$, we define
\[
S_{i+1} =
\begin{cases}
S_{i} & S_{i} \subseteq R_{i+1}, \\
S_{i} \setminus \{u\} \cup \{v\} & R_{i} \setminus R_{i+1} = \{u\} \subseteq S_{i}, \ 
R_{i+1} \setminus R_{i} = \{v\}.
\end{cases}
\]
Intuitively, $S_{0} \subseteq R_{0}$ can be seen as the set of representatives of all $X_{v}$ contained in $R_{0}$.
For $i \ge 1$, the set $S_i \subseteq R_{i}$ traces the tokens started on the representatives chosen to $S_{0}$.
Each $S_i$ is an independent set of $G$ with size $|I|$.
For $0 \le i \le \ell$,
we now construct $I_{i}$ by projecting the vertices in $S_{i}$ onto $V(H)$;
that is, we define
\[
I_{i} = \{v \in V(H) \mid X_{v} \cap S_{i} \ne \emptyset\}.
\]
Since each $S_{i}$ is an independent set, each $I_{i}$ is an independent set too.
Assume that $I_{i} \ne I_{i+1}$ for some $i$,
$I_{i} \setminus I_{i+1} = \{u\}$, and $I_{i+1} \setminus I_{i} = \{v\}$.
Then, $S_{i} \ne S_{i+1}$ holds.
In particular, 
$S_{i} \setminus S_{i+1} \subseteq X_{u}$ and $S_{i+1} \setminus S_{i} \subseteq X_{v}$.
By the definition of the sequence $\langle S_{0}, \dots, S_{\ell} \rangle$,
this further implies that 
$R_{i} \ne R_{i+1}$,
$R_{i} \setminus R_{i+1} \subseteq X_{u}$,
and $R_{i+1} \setminus R_{i} \subseteq X_{v}$.
Since $\langle R_{0}, \dots, R_{\ell} \rangle$ is a $\TS$-sequence in $G$,
the vertices $u$ and $v$ are adjacent in $H$ by the definition of $G$.
Hence, $I_{i} \onestep I_{i+1}$ under $\TS$.
Therefore, by skipping the sets $I_{i}$ with $I_{i} = I_{i+1}$ in
$\langle I_{0}, \dots, I_{\ell} \rangle$,
we have a $\TS$-sequence from $I$ to $I'$.
\end{proof}

We can use the same reduction in the proof of Theorem~\ref{thm:RISR-chordal} from the same problem but on a different graph class
to show the PSPACE-completeness on graphs of bounded bandwidth.
\begin{corollary}\label{cor:bandwidth_R}
For every constant $d \ge 0$,
there is a constant $b_{d}$ depending only on $d$
such that $\RISR{d}$ is PSPACE-complete under $\TJ$ and $\TS$
on graphs of bandwidth at most $b_{d}$.
\end{corollary}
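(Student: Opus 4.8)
The plan is to re-run, without change, the polynomial-time reduction built in the proof of Theorem~\ref{thm:RISR-chordal}, but to feed it instances of \textsc{Independent Set Reconfiguration} under $\TS$ on graphs of \emph{constant bandwidth} instead of on chordal graphs. It is known that \textsc{Independent Set Reconfiguration} under $\TS$ (and under $\TJ$) is PSPACE-complete on graphs whose bandwidth is at most some absolute constant $b$~\cite{Wrochna18}. When $d = 0$ each set $X_{v}$ of that reduction is a single vertex, so the constructed graph $G$ is just an isomorphic copy of the input graph $H$ with $\Uini = I$ and $\Utar = I'$; hence $\RISR{0}$ already inherits PSPACE-completeness under both rules on graphs of bandwidth at most $b$, and we take $b_{0} = b$. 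From now on assume $d \ge 1$.

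First I would bound the bandwidth of the graph $G$ produced by the reduction. Recall that $G$ replaces each $v \in V(H)$ by a clique $X_{v}$ of $d+1$ true twins and joins $X_{u}$ to $X_{v}$ completely whenever $\{u,v\} \in E(H)$. Let $f \colon V(G) \to V(H)$ map every vertex of $X_{v}$ to $v$. Then $|f^{-1}(v)| = d+1$ for all $v$, and every edge of $G$ either lies inside some $X_{v}$ (so its endpoints have equal images) or joins $X_{u}$ to $X_{v}$ with $\{u,v\} \in E(H)$ (so the images are adjacent in $H$); thus $H$ is a $(d+1)$-sketch of $G$. Since $H$ has bandwidth at most $b$, Lemma~\ref{lem:bw} gives that $G$ has bandwidth at most $(d+1)(b+1)$. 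So it suffices to set $b_{d} = (d+1)(b+1)$, a constant depending only on $d$.

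It then remains only to note that the equivalence proved in Theorem~\ref{thm:RISR-chordal} --- that $\langle H, I, I' \rangle$ is a yes-instance of \textsc{Independent Set Reconfiguration} under $\TS$ if and only if $\langle G, \Uini, \Utar \rangle$ is a yes-instance of $\RISR{d}$ under $\TJ$ and $\TS$ --- is established without ever using that $H$ or $G$ is chordal: the argument uses only the clique structure of $G[\Uini]$ and $G[\Utar]$, Lemma~\ref{lem:isomorphism}, the true-twin property of each $X_{v}$, and the hypothesis $d \ge 1$. Hence for $d \ge 1$ the reduction is a valid polynomial-time reduction showing $\RISR{d}$ PSPACE-hard on graphs of bandwidth at most $b_{d}$ under both $\TJ$ and $\TS$; combined with membership in PSPACE (Section~\ref{sec:pre}) this yields PSPACE-completeness.

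The one external ingredient the whole argument leans on is the PSPACE-completeness of \textsc{Independent Set Reconfiguration} under $\TS$ on graphs of constant bandwidth, which is what makes the source instances carry a bandwidth bound independent of everything else. Once that is granted, the $t$-sketch estimate of Lemma~\ref{lem:bw} does all of the remaining work and the $d = 0$ case is immediate because there the reduction collapses to the identity, so I do not anticipate a real obstacle --- the corollary is essentially one application of Lemma~\ref{lem:bw} on top of the reduction already in hand.
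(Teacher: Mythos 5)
Your proposal is correct and follows essentially the same route as the paper: apply the reduction of Theorem~\ref{thm:RISR-chordal} to Wrochna's bounded-bandwidth PSPACE-complete instances of \textsc{Independent Set Reconfiguration}, observe that $H$ is a $(d+1)$-sketch of $G$, and invoke Lemma~\ref{lem:bw} to get $b_{d} = (d+1)(b_{0}+1)$. Your extra remarks (that the equivalence proof never uses chordality, and that $d=0$ is immediate from Wrochna) are correct and merely make explicit what the paper leaves implicit.
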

\begin{proof}
Wrochna~\cite{Wrochna18} showed that there exists a constant $b_{0}$ such that $\RISR{0}$
is PSPACE-complete under $\TJ$ and $\TS$ on graphs of bandwidth at most $b_{0}$.
Let $H$ be a graph of bandwidth at most $b_{0}$.
By applying the reduction in the proof of Theorem~\ref{thm:RISR-chordal} to $H$,
we obtain a graph $G$ of bandwidth at most $b_{d} \coloneqq (d+1)(b_{0}+1)$.
This upper bound of the bandwidth follows from the observation that $H$ is a $(d+1)$-sketch of $G$ and by Lemma~\ref{lem:bw}.
\end{proof}

Similarly, we can show the W[1]-hardness of $\RISR{d}$ under $\TJ$ and $\TS$ parameterized by the natural parameter $|\Uini|$,
which is often called as the \emph{solution size}.
For $d=1$, the W[1]-hardness is known under both $\TJ$~\cite{ItoKOSUY20} and $\TS$~\cite{BartierBDLM21}.
For an instance of $\RISR{0}$ under $\TS$ parameterized by the solution size $k$,
we apply the reduction in the proof of Theorem~\ref{thm:RISR-chordal}.
The obtained equivalent instance of $\RISR{d}$ under $\TJ$ and $\TS$ has solution size $k (d+1)$,
and thus the following holds.
\begin{corollary}
For every constant $d \ge 0$,
$\RISR{d}$ is W[1]-hard parameterized by the solution size under $\TJ$ and $\TS$.
\end{corollary}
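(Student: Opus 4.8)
The plan is to obtain the corollary as an essentially free consequence of the reduction already built for Theorem~\ref{thm:RISR-chordal}, after one checks that it behaves well with respect to the natural parameter. First I would record the base case: $\RISR{0}$, i.e.\ \textsc{Independent Set Reconfiguration}, is known to be W[1]-hard when parameterized by the number of tokens $|\Uini|$, under $\TS$ (and under $\TJ$). This already settles $d=0$, and, together with the separately known results for $d=1$ under $\TJ$~\cite{ItoKOSUY20} and $\TS$~\cite{BartierBDLM21}, leaves only $d \ge 1$ to be treated uniformly.

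For $d \ge 1$, I would take an arbitrary instance $\langle H, I, I' \rangle$ of $\RISR{0}$ under $\TS$ with parameter $k = |I|$, apply the construction from the proof of Theorem~\ref{thm:RISR-chordal}, and obtain an instance $\langle G, \Uini, \Utar \rangle$ of $\RISR{d}$ under $\TJ$ and $\TS$. Everything I need has in fact already been proved there: the construction runs in polynomial time, and it is a correct reduction, that is, $\langle H, I, I' \rangle$ is a yes-instance under $\TS$ if and only if $\langle G, \Uini, \Utar \rangle$ is a yes-instance under $\TJ$, equivalently under $\TS$. The only new point is the parameter bound: since $\Uini = \bigcup_{v \in I} X_v$, each $X_v$ has exactly $d+1$ vertices, and these vertex sets are pairwise disjoint for distinct $v$, we get $|\Uini| = (d+1)\,|I| = (d+1)k$. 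Thus the new parameter depends only on the old parameter $k$ and the constant $d$, so $\langle H, I, I' \rangle \mapsto \langle G, \Uini, \Utar \rangle$ is an fpt-reduction from $\RISR{0}$ under $\TS$ parameterized by solution size to $\RISR{d}$ under $\TJ$ (and simultaneously under $\TS$) parameterized by solution size; W[1]-hardness of the target problem follows.

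I do not expect a genuine obstacle, since the substantive work is inherited from Theorem~\ref{thm:RISR-chordal}. The two points that deserve care are: (i) correctly sourcing the base case, namely W[1]-hardness of $\RISR{0}$ parameterized by the number of tokens under $\TS$ (and $\TJ$); and (ii) confirming that the reduction of Theorem~\ref{thm:RISR-chordal} may be started from a general host graph $H$ rather than a chordal one — inspecting that proof, chordality of $G$ was deduced from chordality of $H$, but neither the construction nor the correctness argument uses it anywhere else, so the reduction applies verbatim to an arbitrary $H$. With those checked, the corollary is immediate; indeed the same reduction simultaneously yields the bandwidth bound of Corollary~\ref{cor:bandwidth_R} and this parameterized lower bound.
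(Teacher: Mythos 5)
Your proposal is correct and matches the paper's own argument: the paper likewise obtains the corollary by applying the reduction from Theorem~\ref{thm:RISR-chordal} to an instance of $\RISR{0}$ under $\TS$ parameterized by solution size $k$, noting that the resulting instance has solution size $k(d+1)$, so it is an fpt-reduction (with the $d=0$ and $d=1$ cases covered by the cited known results). Your extra check that chordality of the host graph plays no role in the correctness of that reduction is exactly what the paper implicitly relies on (as it also does for Corollary~\ref{cor:bandwidth_R}).
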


\subsection{Bipartite graphs}
As mentioned in Introduction, on bipartite graphs, 
$\RISR{0}$ is PSPACE-complete under $\TS$ 
but NP-complete under $\TJ$~\cite{LokshtanovM19}.
That is, $\RISR{0}$ is intractable under both rules but in different senses.
In this section, we study the complexity of $\RISR{d}$ on bipartite graphs for $d \ge 1$
and show a kind of reverse phenomenon: the problem becomes trivial with $\TS$ but harder with $\TJ$.

First we observe the triviality under $\TS$.
Observe that if $\setU \onestep \setU'$ under $\TS$,
then there exist adjacent vertices $u \in \setU \setminus \setU'$ and $v \in \setU' \setminus \setU$ such that
$u$ and $v$ have a common neighbor $w \in \setU \cap \setU'$ as $d \ge 1$.
Thus the graph contains a triangle formed by $u, v, w$.
Therefore, in triangle-free graphs (and thus in bipartite graphs), no nontrivial $\TS$-sequence exists.
\begin{observation}
\label{obs:RISR-bigraph-TS}
For every $d \ge 1$, $\RISR{d}$ under $\TS$ on triangle-free graphs is polynomial-time solvable.
\end{observation}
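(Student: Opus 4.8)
The plan is to observe that, on a triangle-free graph, no nontrivial $\TS$-step is possible when $d \ge 1$, so that $\langle G, \Uini, \Utar \rangle$ is a yes-instance of $\RISR{d}$ under $\TS$ if and only if $\Uini = \Utar$; since this equality is checkable in linear time, the observation follows. The ``if'' direction is immediate, because $\langle \Uini \rangle$ is then a valid $\TS$-sequence of length $0$.

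For the ``only if'' direction I would argue by contradiction. Suppose there is a $\TS$-sequence $\langle \setU_0, \dots, \setU_\ell \rangle$ with $\setU_0 = \Uini$, $\setU_\ell = \Utar$, and $\ell \ge 1$, and consider its first step $\setU_0 \onestep \setU_1$ under $\TS$. By definition there are vertices $u \in \setU_0 \setminus \setU_1$ and $v \in \setU_1 \setminus \setU_0$ with $uv \in E(G)$, and moreover $\setU_0 \setminus \{u\} = \setU_0 \cap \setU_1 = \setU_1 \setminus \{v\}$; write $W = \setU_0 \cap \setU_1$. Since $d \ge 1$ and $G[\setU_1]$ is $d$-regular, $v$ has a neighbor $w$ in $\setU_1$; as $w \ne v$, we get $w \in W$.

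The one point that needs care is showing that $w$ is also adjacent to $u$, and this is a short degree count. Let $A = N(w) \cap W$. Because $wv \in E(G)$, the set of neighbors of $w$ inside $\setU_1$ is exactly $A \cup \{v\}$, so $d$-regularity of $G[\setU_1]$ gives $|A| = d-1$. The set of neighbors of $w$ inside $\setU_0$ is $A$, together with $u$ in case $wu \in E(G)$; since $G[\setU_0]$ is also $d$-regular, $w$ has exactly $d$ neighbors in $\setU_0$, which is only possible if $wu \in E(G)$. Hence $u$, $v$, $w$ are pairwise adjacent and induce a triangle in $G$, contradicting triangle-freeness. Therefore $\ell = 0$ and $\Uini = \Utar$.

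Consequently the algorithm merely verifies that $\Uini$ and $\Utar$ are $d$-regular sets and accepts exactly when $\Uini = \Utar$, which takes linear time; this proves the observation, and in particular it applies to bipartite graphs, yielding the corresponding entry of Table~\ref{tbl:summary}. I do not anticipate any genuine obstacle beyond the brief degree-counting argument above, which is the only nonobvious step.
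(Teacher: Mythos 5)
Your proposal is correct and takes essentially the same route as the paper: the paper likewise observes that a nontrivial $\TS$-step with $d \ge 1$ forces the two exchanged vertices to have a common neighbor in the intersection (your degree count makes this explicit), creating a triangle, so on triangle-free graphs the answer is $\YES$ exactly when $\Uini = \Utar$.
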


Under the $\TJ$ rule, we first show that the restricted version $\CRISR{d}$ for $d \ge 2$ is already PSPACE-complete on bipartite graphs.
It is known that $\CRISR{2}$ is PSPACE-complete on general graphs~\cite{HanakaIMMNSSV20}.
Our proof for $d=2$ basically follows their reduction from \textsc{Shortest Path Reconfiguration} (SPR, for short)
to $\CRISR{2}$ on general graphs. We start with a reduction for the case of $d = 2$ that has some additional properties
and then increase $d$ by using the properties.

An instance of SPR consists of
a graph $H$ and the vertex sets $P$ and $P'$ of two shortest $x$--$y$ paths in $H$.
SPR asks whether there is a sequence $\langle P_{0}, P_{1}, \dots, P_{q} \rangle$
such that $P_{0} = P$, $P_{q} = P'$,
each $P_{i}$ is the vertex set of a shortest $x$--$y$ path, and 
$|P_{i} \setminus P_{i-1}| = |P_{i-1} \setminus P_{i}| = 1$ for $1 \le i \le q$.
Observe that in polynomial time, we can remove all vertices and edges that are not on any shortest $x$--$y$ path.
Hence, we can assume that the vertex set of $H$ is partitioned into independent sets $D_{1}, D_{2}, \dots, D_{r}$ 
such that $D_{1} = \{x\}$, $D_{r} = \{y\}$, and
each $D_{i}$ is the set of vertices of distance $i-1$ from $x$
(see~\figurename\ref{fig:crisr2-bi}).
%
%Since the edges in each $D_{i}$ does not affect the problem, 
%we can assume arbitrary structures in $D_{i}$.
%For example, we can assume that all sets $D_{i}$ are cliques or all $D_{i}$ are independent sets.
%
Bonsma~\cite{Bonsma13} showed that SPR is PSPACE-complete.

\begin{theorem}
\label{thm:CRISR-bigraph-TJ}
For every constant $d \ge 2$,
$\CRISR{d}$ is PSPACE-complete on bipartite graphs under $\TJ$.
\end{theorem}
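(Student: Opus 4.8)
The plan is a polynomial-time reduction from \textsc{Shortest Path Reconfiguration} (SPR), which is PSPACE-complete~\cite{Bonsma13}, starting from an SPR instance already cleaned up into the layered form discussed above: the vertex set of $H$ is partitioned into independent sets $D_{1}=\{x\},D_{2},\dots,D_{r}=\{y\}$, edges of $H$ run only between consecutive layers (an edge inside a layer cannot lie on a shortest $x$--$y$ path), and the vertex set of a shortest $x$--$y$ path picks exactly one vertex from each $D_{i}$. Since membership in PSPACE is already known, only PSPACE-hardness is needed. Following the announced structure, I would first build a bipartite $\CRISR{2}$ instance with some extra structure, and then bootstrap to all $d\ge 2$.

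\textbf{The case $d=2$.} As in Hanaka et al.~\cite{HanakaIMMNSSV20}, turn a shortest $x$--$y$ path into a connected $2$-regular set (an induced cycle) by closing it with a fixed \emph{return path} $Q$ of fresh internal vertices joining $y$ back to $x$. Keep every edge of $H$; then $H$ is bipartite with colour classes given by the parity of the layer index, and by choosing the length of $Q$ to have the same parity as $r-1$, every cycle consisting of $Q$ and a shortest $x$--$y$ path is even, so the whole graph $G$ is bipartite. Let $\Uini$ and $\Utar$ be the cycles obtained from $P$ and $P'$. For correctness one argues: (i) the internal vertices of $Q$ are \emph{rigid}, i.e.\ admit no $\TJ$-move, which is arranged by making consecutive return-path vertices have no common neighbour other than the one lying between them; (ii) by Lemma~\ref{lem:isomorphism}, every set on a reconfiguration sequence from $\Uini$ is connected and induces the same cycle, and by (i) it still contains $V(Q)$, hence equals $Q$ together with a shortest $x$--$y$ path; (iii) consequently a $\TJ$-move replaces one $D_{i}$-vertex by another one adjacent to its two current neighbours on the path, which is exactly an SPR step, and conversely every SPR step lifts to a $\TJ$-move. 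The extra structure I would record for the next step is that the movable vertices are partitioned into the layers $D_{i}$, that a move keeps a token inside its own layer, and that the adjacency of a movable vertex to the rest of the graph depends only on its layer.

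\textbf{Raising the degree to $d\ge 3$.} To lift the common degree from $2$ to $d$, I would attach a rigid \emph{anchor gadget}. For every internal vertex of $Q$ and for every layer $D_{i}$ create $d-2$ new \emph{anchor} vertices; make each anchor assigned to $D_{i}$ adjacent to \emph{all} of $D_{i}$, and each anchor assigned to a return-path vertex $u$ adjacent only to $u$. Among the anchors place the edges of a fixed $(d-1)$-regular bipartite graph $F$ whose bipartition agrees with the colour each anchor is forced into (the colour of an anchor of $D_{i}$ is the one opposite to that of $D_{i}$, which is well defined since $D_{i}$ lies entirely in one class); the two sides of $F$ are balanced simply by making $Q$ long enough, so that the anchors hanging off $Q$ absorb any parity imbalance, and long enough that no two anchors need share the same neighbourhood in $F$. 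Now let the new $\Uini$ and $\Utar$ be the old cycles together with \emph{all} anchors. Every movable vertex then has degree $2+(d-2)=d$, every return-path vertex has degree $d$, and every anchor has degree $1+(d-1)=d$; connectedness and bipartiteness are immediate; and since an anchor of $D_{i}$ sees all of $D_{i}$, replacing $u_{i}$ by any other vertex of $D_{i}$ preserves every anchor adjacency, so the movable $\TJ$-moves are still exactly the SPR steps. As in the $d=2$ case one checks that no $\TJ$-move can delete an anchor or a return-path vertex --- its replacement would have to be adjacent to all $d-1$ of that anchor's $F$-neighbours together with the unique layer-vertex or return-path vertex it hangs on, and the distinctness of anchor $F$-neighbourhoods (together with $d-1>d-2$) leaves no such vertex --- so Lemma~\ref{lem:isomorphism} again forces every reachable set to be ``$F$ plus $V(Q)$ plus the interior of a shortest $x$--$y$ path''. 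The reduction is clearly polynomial, which completes the plan.

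\textbf{Main obstacle.} The delicate point is not the degree bookkeeping but showing that the constructed bipartite graph has \emph{no unintended} connected $d$-regular induced subgraphs reachable from $\Uini$: one must prove rigidity of both the return path and the whole anchor gadget $F$, and then invoke Lemma~\ref{lem:isomorphism} (which also guarantees that intermediate sets are connected) to conclude that every reachable set has the intended shape, so that reconfiguration sequences correspond, after deleting repeated sets, to SPR sequences. Designing $F$ to be simultaneously $(d-1)$-regular, bipartite with the prescribed colour classes, and rigid is finicky but elementary gadget engineering; once it is in place, the equivalence with SPR --- and hence PSPACE-completeness --- follows.
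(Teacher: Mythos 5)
Your proposal is correct and follows the same overall route as the paper: a reduction from SPR~\cite{Bonsma13}, closing each shortest $x$--$y$ path into an induced cycle with a fresh return path (which is rigid under $\TJ$), and then attaching always-present independent ``degree-boosting'' vertices so that an invariant argument shows every reachable $d$-regular set is exactly ``gadget $\cup$ return path $\cup$ shortest path'', making $\TJ$-moves coincide with SPR steps. The only genuine divergence is the degree-raising gadget for $d \ge 3$. The paper groups the layers into blocks of $2d$ consecutive sets $V_{2d(i-1)+1},\dots,V_{2di}$ (padding the cycle length to a multiple of $2d$ that is at least $6$) and adds, per block, two independent sets $A_i,B_i$ of size $d-2$ completely joined to the $d$ odd-indexed (resp.\ even-indexed) sets of the block; since the cycle meets each $V_j$ in exactly one vertex, every added vertex automatically has induced degree $d$ \emph{without any edges among the added vertices}, and rigidity of the added vertices falls out of the same one-line argument (a replacement would need neighbors in $d$ distinct sets $V_p,V_{p+2},\dots$, so it would have to lie in $I$, which is already entirely contained in the current set). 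Your per-layer anchors instead get their degree from an auxiliary $(d-1)$-regular bipartite graph $F$ among the anchors, which is sound but costs you extra verification: existence and side-balance of $F$ (fine, since an even cycle length makes the two anchor classes equal), and anchor rigidity --- where, incidentally, you do not actually need distinct $F$-neighborhoods: once the invariant ``all anchors present'' is in play, any candidate replacement from outside the set is a layer vertex and is adjacent to at most $d-2 < d-1$ anchors. What your variant buys is a very local correspondence (anchors are tied to a single layer, and return-path rigidity for $d\ge 3$ is immediate via private anchors); what the paper's variant buys is that no auxiliary regular graph is needed at all, so the whole correctness proof reduces to the single claim $I \subseteq \setU_i$ and $|\setU_i \cap V_j| = 1$. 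Two small things you should still pin down to make your write-up airtight: fix an explicit minimum cycle length (the paper uses $L \ge 6$) so that for $d=2$ a replaced layer vertex cannot find its two cycle-neighbors' common neighbor outside its own layer near the junction with the return path, and state the parity/length conditions on $Q$ concretely rather than ``long enough''.
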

\begin{proof}
Let $\langle H, P, P' \rangle$ be an instance of SPR
with the partition $D_{1}, D_{2}, \dots, D_{r}$ of the vertex set defined as above.
We assume that all sets $D_{i}$ are independent sets.
Let $L$ be the smallest multiple of $2d$ larger than or equal to $\max\{r, 6\}$.
To the graph $H$, we add $L - r$ vertices $v_{r+1}, \dots, v_{L}$
and $L-r+1$ edges $\{y, v_{r+1}\}$, $\{v_{L}, x\}$, and $\{v_{i}, v_{i+1}\}$ for $r+1 \le i \le L-1$.
We set $S = P \cup \{v_{r+1}, \dots, v_{L}\}$ and $S' = P' \cup \{v_{r+1}, \dots, v_{L}\}$.
For $1 \le i \le r$, we set $V_{i} = D_{i}$, and for $r+1 \le i \le L$, we set $V_{i} = \{v_{i}\}$. 
We call the graph constructed $H'$.
See~\figurename~\ref{fig:crisr2-bi}.
\begin{figure}[t]
  \centering
  
  \newcommand{\orgInstance}{
    % generate nodes
    \foreach \x in {0, 1, 2, 3, 4, 5, 6, 7}{
            \foreach \y in {1, 2, 3, 4}
                {
                    \node at (\x*\xScale,\y*\yScale) (v\x\y) {};
                }
        }
    \draw[thick,black,fill=white] \convexpath{v01,v04}{1.3em};
    \draw[thick,black,fill=white] \convexpath{v71,v74}{1.3em};
    \node[circle, line width=1pt, draw=blue, fill=white] at (0, 2.5*\yScale) (v0) {};
    \node[circle, line width=1pt, draw=blue, fill=white] at (7*\xScale, 2.5*\yScale) (v7) {};
    \node at (v01.north) {\huge $x$};
    \node at (v71.north) {\huge $y$};
    % generate edges 
    \foreach \u / \v in {
            v0/v11, v0/v12, v0/v13, v0/v14,
            v11/v21, v11/v22, v12/v22, v13/v21, v13/v23, v14/v23, v14/v24,
            v21/v31, v21/v32, v22/v32, v22/v33, v22/v34, v23/v31, v24/v32, v24/v33,
            v31/v41, v32/v41, v32/v43, v32/v44, v33/v41, v33/v43, v34/v42, v34/v43,
            v51/v61, v51/v63, v51/v64, v52/v61, v52/v63, v53/v62, v53/v63, v54/v63, v54/v64,
            v7/v61, v7/v62, v7/v63, v7/v64%
        }{
            \draw (\u) to (\v);
        }
    \draw[thick,black,fill=white] \convexpath{v11,v14}{1.3em};
    \draw[thick,black,fill=white] \convexpath{v21,v24}{1.3em};
    \draw[thick,black,fill=white] \convexpath{v31,v34}{1.3em};
    \draw[thick,black,fill=white] \convexpath{v61,v64}{1.3em};
    % path
    %% vertices on a path
    \foreach \u in {v11, v22, v32, v44, v51, v62} { \node[circle, fill=\iniClr] at (\u) (c\u) {}; }
    \foreach \u in {v12, v24, v31, v41, v52, v63} { \node[circle, fill=\tarClr] at (\u) (c\u) {}; }
    %% edges on a path
    \foreach \u  [remember=\u as \v (initially v0)] in {v11, v22, v32, v44, v51, v62, v7} { \draw[line width=2pt, \iniClr] (\v) to (\u); }
    \draw[dashed, line width=3pt, white] (v44) to (v51);
    \foreach \u  [remember=\u as \v (initially v0)] in {v12, v24, v31, v41, v52, v63, v7} { \draw[line width=2pt, \tarClr] (\v) to (\u); }
    \draw[dashed, line width=3pt, white] (v41) to (v52);
}

\def\picScale{0.47}
\begin{tikzpicture}[every node/.style={},scale=\picScale, every node/.style={scale=\picScale}]
    \begin{scope}

        \orgInstance
        \huge
        \node at ($(v0)  + (-.8*\xScale, 0)$) (Hlabel) {\Huge $H$};
        \node at ($(v14) + (-1*\xScale , 1.6 *\yScale)$) (D1Label) {$D_1$};
        \node at ($(v14) + (0          , 1.6 *\yScale)$) (D2Label) {$D_2$};
        \node at ($(v24) + (0          , 1.6 *\yScale)$) (D3Label) {$D_3$};
        \node at ($(v34) + (0          , 1.6 *\yScale)$) (D4Label) {$D_4$};
        \node at ($(v64) + (0          , 1.6 *\yScale)$) (Dr1Label) {$D_{r-1}$};
        \node at ($(v64) + (1*\xScale  , 1.6 *\yScale)$) (DrLabel) {$D_{r}$};
    \end{scope}

    \begin{scope}[shift={(0, -6)}]
        \orgInstance
        \foreach \x in {1, 2, 3, 4} {
                \node at ({(7 + \x)*\xScale}, 2.5*\yScale) (w\x) {};
            }

        \foreach \u  [remember=\u as \v (initially v7)] in {w1, w2, w3, w4} {
                \draw[line width=2pt, \tarClr] ($(\v)+(0, 2pt)$) to ($(\u)+(0, 2pt)$);
                \draw[line width=2pt, \iniClr] ($(\v)-(0, 2pt)$) to ($(\u)-(0, 2pt)$);
            }
        \path [side by side={\iniClr}{\tarClr}] (v0) .. controls +(0,4) and +(0,4) ..  (w4);
        \node[circle, line width=1pt, draw=blue, fill=white] at (v0) {};
        \node[circle, line width=1pt, draw=blue, fill=white] at (v7) {};
        \foreach \x in {1, 2,  4} {
                \node[circle, line width=1pt, draw=blue, fill=white] at (w\x) (cw\x) {};
            }
        \draw[line width=5pt, dashed, white] ($(w2)+(1, 0)$) to  ($(w4)+ (-1, 0)$);

        \huge
        \node at ($(v7) + (1.3*\xScale,  1 *\yScale)$) (Vr3Label) {$v_{r+1}$};
        \node at ($(v7) + (2.3*\xScale,  1 *\yScale)$) (Vr4Label) {$v_{r+2}$};
        \node at ($(v7) + (4.3*\xScale,  1 *\yScale)$) (Vr5Label) {$v_{L}$};
        \node at ($(v0)  + (-.8*\xScale, 0)$) (Glabel) {\Huge $H'$};
        \node at ($(v11) + (-1*\xScale, -1.8 *\yScale)$) (V1Label)  {$V_1$};
        \node at ($(v11) + (0,          -1.8 *\yScale)$) (V2Label)  {$V_2$};
        \node at ($(v21) + (0,          -1.8 *\yScale)$) (V3Label)  {$V_3$};
        \node at ($(v31) + (0,          -1.8 *\yScale)$) (V4Label)  {$V_4$};
        \node at ($(v61) + (0,          -1.8 *\yScale)$) (Vr1Label) {$V_{r-1}$};
        \node at ($(v61) + (1*\xScale,  -1.8 *\yScale)$) (Vr2Label) {$V_{r}$};
        \node at ($(v61) + (2*\xScale,  -1.8 *\yScale)$) (Vr3Label) {$V_{r+1}$};
        \node at ($(v61) + (3*\xScale,  -1.8 *\yScale)$) (Vr4Label) {$V_{r+2}$};
        \node at ($(v61) + (5*\xScale,  -1.8 *\yScale)$) (Vr5Label) {$V_{L}$};
    \end{scope}
\end{tikzpicture}
  \caption{Reduction from SPR to $\CRISR{2}$ on bipartite graphs.}
  \label{fig:crisr2-bi}
\end{figure}
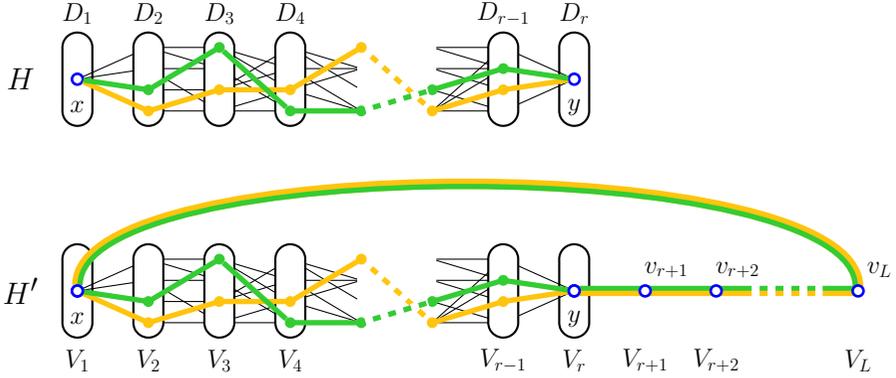

The construction is done if $d = 2$. 
For the cases of $d \ge 3$, we need some additional parts to increase the degree.
For each $i$ with $1 \le i \le L/(2d)$, we attach the gadget in \figurename~\ref{fig:crisrd-bi} to
the consecutive $2d$ sets $V_{2d(i-1)+1}, V_{2d(i-1)+2}, \dots, V_{2di}$.
That is, we add two size-$(d-2)$ independent sets $A_{i}$ and $B_{i}$,
and add all possible edges between $A_{i}$ and $V_{2d(i-1) + (2j-1)}$ for $1 \le j \le d$
and between $B_{i}$ and $V_{2d(i-1) + (2j)}$ for $1 \le j \le d$.
We call the resultant bipartite graph $G$.
Let $I$ denote the set of vertices added, i.e., $I = \bigcup_{1 \le i \le L/(2d)} (A_{i} \cup B_{i})$.
We set $\Uini = S \cup I$ and $\Utar = S' \cup I$.
Clearly, $\Uini$ and $\Utar$ are connected $d$-regular sets of $G$.

We now show that
$\langle G, \Uini, \Utar\rangle$ is a yes-instance of $\CRISR{d}$
if and only if
$\langle H, P, P' \rangle$ is a yes-instance of SPR\@.

To show the if direction, 
assume that there is a reconfiguration sequence $\langle P_{0}, \dots, P_{q} \rangle$ between $P$ and $P'$.
For $0 \le i \le q$, we set $\setU_{i} = P_{i} \cup \{v_{r+1}, \dots, v_{L}\} \cup I$.
Observe that $\setU_{i}$ is a connected $d$-regular set:
$P_{i} \cup \{v_{r+1}, \dots, v_{L}\}$ induces a cycle,
each vertex in the cycle has $d-2$ neighbors in $I$,
and each vertex in $I$ has $d$ neighbors in the cycle.
Since $P_{i} \setminus P_{i-1} = \setU_{i} \setminus \setU_{i-1}$
and $P_{i-1} \setminus P_{i} = \setU_{i-1} \setminus \setU_{i}$ for $1 \le i \le q$,
$\langle \setU_{0}, \dots, \setU_{q} \rangle$ is a $\TJ$-sequence between $\Uini$ and $\Utar$.

To show the only-if direction, 
assume that there is a $\TJ$-sequence $\langle \setU_{0}, \dots, \setU_{q} \rangle$ between $\Uini$ and $\Utar$.
We first show the following fact.
\begin{claim}
$I \subseteq \setU_{i}$ for each $i$ and
$|\setU_{i} \cap V_{j}| = 1$ for each $i$ and $j$.
\end{claim}
\noindent\textit{Proof of Claim. }
The claim is true for $\setU_{0} = \Uini$.
Assume that the claim holds for some $i$ with $0 \le i < q$.
Let $\setU_{i} \setminus \setU_{i+1} = \{u\}$ and $\setU_{i+1} \setminus \setU_{i} = \{v\}$.
Suppose that $u \in I$.
The $d$ neighbors of $u$ in $G[\setU_{i}]$ belong to $d$ different sets $V_{p}, V_{p+2}, \dots, V_{p+2d-2}$ for some $p$.
Since $v$ has the same neighborhood in $G[\setU_{i+1}]$, it has to belong to $I$ as well.
This contradicts the assumption that $v \notin \setU_{i}$ and $I \subseteq \setU_{i}$.
Thus $u$ belongs to some $V_{j}$.
In $G[\setU_{i}]$, $u$ has neighbors in $I$, $V_{j-1}$, and $V_{j+1}$ (where $V_{0} = V_{L}$ and $V_{L+1} = V_{1}$).
Since $L \ge 6$, $v$ has to belong to $V_{j}$. Thus the claim holds.
\hfill
$\triangleleft$

By the claim above, each $\setU_{i}$ includes the vertices $v_{r+1}, \dots, v_{L}$.
Let $P_{i} = \setU_{i} \setminus (I \cup \{v_{r+1}, \dots, v_{L}\})$.
For $2 \le j \le r-1$, the unique vertex in $P_{i} \cap V_{j}$ has exactly two neighbors in $H[P_{i}]$;
one in $V_{j-1}$ and the other in $V_{j+1}$. 
That is, $P_{i}$ is a shortest $x$--$y$ path in $H$.
Therefore, $\langle P_{0}, P_{1}, \dots, P_{q} \rangle$ is a reconfiguration sequence from $P$ to $P'$.
\begin{figure}[t]
  \centering
  
  \def\iclrone{white}
\def\iclrtwo{lightgray}
\def\picScale{0.95}
\begin{tikzpicture}[scale=\picScale, every node/.style={scale=\picScale}]
    \foreach \x in {0, 1, 2, 3, 4, 5, 6, 7, 8} {
            \foreach \y in {1, 2, 3, 4, 5, 6} {
                    \node at (\x, \y*\yScale * 1.3 / 6) (v\x\y) {};
                }
        }

    \foreach \y in {1, 2} {
            \pgfmathsetmacro{\yshift}{ifthenelse(equal(\y,1), "-3", "+1.5")}
            \foreach \x in {0, 1, 2, 3, 4, 5} {
                    \node at (3.5 + \x/4, {(\y + \yshift)*\yScale})  (i\x\y) {};
                }
        }

    \draw[line width=4pt, out=180, in=80] (i02) to (v16);
    \draw[line width=4pt, bend right=20] (i32) to (v36);
    \draw[line width=4pt, bend left=10] (i52) to (v66);
    \draw[line width=4pt, bend left=40] (i01) to (v21);
    \draw[line width=4pt, bend left=5] (i31) to (v41);
    \draw[line width=4pt, out = 0, in=270] (i51) to (v71);

    \draw[thick,black, fill=\iclrone] \convexpath{i01,i51}{6pt};
    \draw[thick,black, fill=\iclrtwo] \convexpath{i02,i52}{6pt};
    \foreach \y in {1, 2} {
            \foreach \x in {0, 1, 2, 5} {
                    \node[circle, minimum size=5pt, thick, draw=black, inner sep=0pt, outer sep=0pt] at (i\x\y) {};
                }
        }

    \draw[thick, dotted, fill=black] ($(i22) + (0.2, 0)$) to ($(i52) - (0.2, 0)$);
    \draw[thick, dotted, fill=black] ($(i21) + (0.2, 0)$) to ($(i51) - (0.2, 0)$);

    \foreach \u/\v in {
            v11/v21, v11/v24, v12/v25, v14/v22, v14/v26, v16/v22, v16/v25,
            v21/v31, v21/v33, v21/v35, v22/v33, v23/v34, v23/v36, v24/v34, v24/v36, v25/v36, v26/v35,
            v31/v42, v31/v44, v31/v46, v32/v45, v32/v46, v34/v42, v34/v44, v35/v42, v35/v44, v35/v45, v36/v46,
            v41/v51, v42/v51, v42/v54, v42/v56, v43/v51, v43/v55, v45/v52, v45/v53, v46/v54, v46/v56,
            v51/v62, v51/v64, v52/v65, v54/v62, v54/v66, v56/v62, v56/v64,
            v61/v73, v61/v76, v62/v73, v64/v71, v64/v75, v66/v73, v66/v74,
            v71/v83, v72/v81, v72/v83, v73/v84, v74/v85, v75/v82, v76/v85,
            v01/v12, v01/v15, v02/v14, v03/v13, v04/v15, v06/v11, v06/v16%
        }{
            \draw (\u) to (\v);
        }
    \draw[thick,white,fill=white] \convexpath{v01,v06}{13pt};
    \draw[thick,white,fill=white] \convexpath{v51,v56}{13pt};
    \draw[thick,white,fill=white] \convexpath{v81,v86}{13pt};
    \foreach \x/\i in {1/(i-1)+1, 2/(i-1)+2, 3/(i-1)+3, 4/(i-1)+4, 6/i-1, 7/i}{
            \pgfmathsetmacro{\col}{ifthenelse(or(or(equal(\x,1),equal(\x,3)),equal(\x,6)), "\iclrone", "\iclrtwo")}
            \pgfmathsetmacro{\labelpos}{ifthenelse(or(or(equal(\x,1),equal(\x,3)),equal(\x,6)), "+0.5", "-1.1")}
            \node at ($(v\x1) - (-0.2, \labelpos)$) {$V_{2d\i}$};
            \draw[thick,black,fill=\col] \convexpath{v\x1,v\x6}{5pt};
        }
    \draw [decorate,decoration={brace,amplitude=6pt}]
    ($(i02.north west) + (0, 0.1)$) -- ($(i52.north east) + (0, 0.1)$) node [black,midway, yshift=13pt]
    {$d-2$};
    \draw [decorate,decoration={brace,amplitude=6pt, mirror}]
    ($(i01.south west) - (0, 0.1)$) -- ($(i51.south east) - (0, 0.1)$) node [black,midway, yshift=-13pt]
    {$d-2$};
    \node at ($(i02) + (-.4, .4)$) {$A_i$};
    \node at ($(i01) + (-.4, -.4)$) {$B_i$};

    \def\rxs{8}
    \def\rys{-0.5}
    \small
    \draw[thick]  (0 + \rxs, 0 + \rys) --(0 + \rxs, 2 + \rys) --(4 + \rxs, 2 + \rys) --(4 + \rxs, 0 + \rys) -- cycle;
    \node at (0.5+\rxs, 0.5+\rys) (e11) {}; \node at (1.2+\rxs, 0.5+\rys) (e12) {};
    \node at (0.5+\rxs, 1.5+\rys) (e21) {}; \node at (1.2+\rxs, 1.5+\rys) (e22) {};
    \draw[line width=4pt] ($(e11)!0.5!(e12)$) to ($(e21)!0.5!(e22)$);
    \draw[thick,black, fill=\iclrone] \convexpath{e11, e12}{6pt};
    \draw[thick,black, fill=\iclrtwo] \convexpath{e21, e22}{6pt};

    \node[anchor=west] at (1.5+\rxs, 0.5+\rys) {independent set};
    \node[anchor=west] at (1.5+\rxs, 1  +\rys) {complete join};
    \node[anchor=west] at (1.5+\rxs, 1.5+\rys) {independent set};
\end{tikzpicture}
  \caption{Increasing the degree from $2$ to $d$.}
  \label{fig:crisrd-bi}
\end{figure}
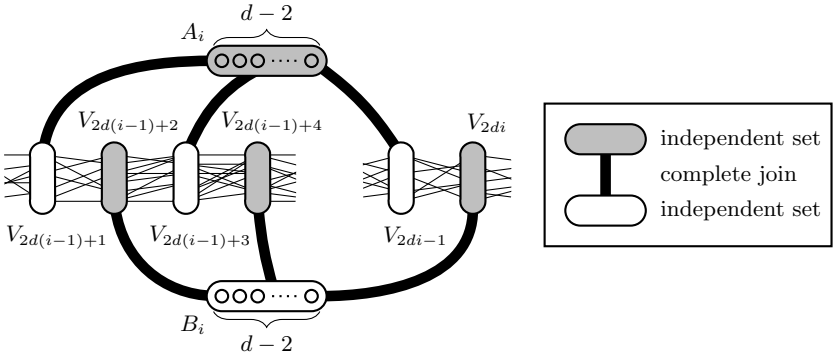
\end{proof}

By slightly modifying the proof of Theorem~\ref{thm:CRISR-bigraph-TJ},
we can show the PSPACE-hardness of $\CRISR{d}$ under $\TS$ on graphs of bounded bandwidth.
\begin{theorem}
\label{thm:CRISR-general}
For every constant $d \ge 2$,
there is a constant $b_{d}$ depending only on $d$ such that 
$\CRISR{d}$ is PSPACE-complete under $\TJ$ and $\TS$ on graphs of bandwidth at most $b_{d}$.
\end{theorem}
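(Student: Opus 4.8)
The plan is to reuse the reduction behind Theorem~\ref{thm:CRISR-bigraph-TJ} with two changes. In that reduction a token moves by jumping within its layer to a \emph{non-adjacent} vertex, so nothing of it survives under $\TS$; since Theorem~\ref{thm:CRISR-general} no longer requires bipartiteness, I would repair this by \emph{turning every layer into a clique}, which turns such a jump into a legal slide and lets a single reduction cover both rules. Second, to keep the bandwidth constant I would start from a version of SPR in which every distance class $D_i$ has bounded size.

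Concretely, I would first record that SPR remains PSPACE-complete when restricted to instances $\langle H, P, P' \rangle$ with $|D_i| \le c$ for an absolute constant $c$; this follows either directly from Bonsma's construction~\cite{Bonsma13} or from a routine tightening of it. Given such an instance, apply exactly the construction in the proof of Theorem~\ref{thm:CRISR-bigraph-TJ} — extend the $x$--$y$ path to a cycle on $L$ vertices with $L$ a multiple of $2d$ and $L \ge \max\{r,6\}$, attach the degree-raising gadgets $A_i, B_i$ of size $d-2$ to every block of $2d$ consecutive layers, and put $\Uini = S \cup I$, $\Utar = S' \cup I$ — with the one change that each $V_i$ is made a clique (the padding vertices $v_{r+1},\dots,v_L$ live in singleton layers, so only the original classes $D_i$ are affected). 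Call the result $G$. As $\Uini$ and $\Utar$ still take exactly one vertex of each $V_i$, they remain connected $d$-regular sets of $G$.

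For correctness I would re-run the argument of Theorem~\ref{thm:CRISR-bigraph-TJ}. The step that needs genuine rechecking is the analysis of legal moves in $G$: from a set that meets each $V_j$ in one vertex and contains all of $I$, a token may move (under $\TJ$ \emph{or} $\TS$) only within its own layer $V_j$, and only to a vertex adjacent to the two path-neighbours then lying in $V_{j-1}$ and $V_{j+1}$ — which is exactly the condition for the corresponding SPR move to be legal. I would check, as in the Claim of Theorem~\ref{thm:CRISR-bigraph-TJ}, that any attempt to push a token into an adjacent layer, or into or among the gadget vertices, destroys $d$-regularity, and that the new intra-layer clique edges can only make the two endpoints of a move adjacent, never create a regularity-preserving shortcut. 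Because each such move is simultaneously a $\TJ$-move and a $\TS$-move, the two rules induce the same reconfiguration graph on the produced instances, so the reduction certifies PSPACE-hardness under both rules at once; the converse direction (an SPR sequence yields a reconfiguration sequence) is the one in Theorem~\ref{thm:CRISR-bigraph-TJ}, now also valid for $\TS$ since consecutive representatives become adjacent.

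It remains to bound the bandwidth, and here I would apply Lemma~\ref{lem:bw}. Let $H^{\ast}$ arise from $G$ by contracting each $V_i$, each $A_i$, and each $B_i$ to a single vertex; then $H^{\ast}$ is an $L$-cycle together with, per block, two further vertices each joined to $d$ cycle-vertices lying in a window of $2d$ consecutive positions, and such a graph has bandwidth $O(d)$ independently of $L$. Since the natural map $f \colon V(G) \to V(H^{\ast})$ satisfies $|f^{-1}(v)| \le \max\{c, d-2\}$ for every $v$, Lemma~\ref{lem:bw} yields bandwidth $O(cd)$ for $G$, i.e.\ a bound $b_d$ depending only on $d$. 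The main obstacles I anticipate are (i) pinning down the bounded-layer variant of SPR (inspecting Bonsma's reduction, or composing it with a bandwidth-bounding gadget) and (ii) the somewhat fiddly case check that making the layers cliques opens up no extra moves and thus preserves the exact correspondence with SPR.
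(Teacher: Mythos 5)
Your proposal is correct and matches the paper's proof essentially step for step: the paper likewise makes each $V_i$ a clique so that every legal move is simultaneously a $\TJ$- and a $\TS$-move, starts from SPR restricted to bounded-size distance classes $D_i$, and bounds the bandwidth via Lemma~\ref{lem:bw} (the paper contracts whole blocks $W_i = A_i \cup B_i \cup \bigcup_{1\le j\le 2d} V_{2d(i-1)+j}$ so that the sketch is simply a cycle, rather than your slightly finer cycle-plus-gadget sketch, but both give a constant $b_d$). One correction: the bounded-layer variant of SPR is not a routine consequence of Bonsma's construction~\cite{Bonsma13}; it is a separate result of Wrochna~\cite{Wrochna18}, which is exactly what the paper invokes at this point.
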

\begin{proof}
From an instance $\langle H, P, P' \rangle$ of SPR,
we construct an instance $\langle G, \Uini, \Utar\rangle$ of $\CRISR{d}$ as in Theorem~\ref{thm:CRISR-bigraph-TJ}.
Now we add all possible edges in each $V_{i}$ and call the resultant graph $G'$.
This does not affect the correctness of the arguments in the proof of Theorem~\ref{thm:CRISR-bigraph-TJ}.
At each step, we still have to remove a vertex and add one from the same set $V_{i}$.
Since we add all possible edges in each $V_{i}$, the vertices involved in this step are adjacent.
That is, a $\TJ$-sequence between $\Uini$ and $\Utar$ in $G$
is a $\TS$-sequence between $\Uini$ and $\Utar$ in $G'$, and vice versa.

Now we prove the claim on the bandwidth.
Wrochna~\cite{Wrochna18} showed that there is a constant $b$ such that SPR is PSPACE-complete even if each $D_{i}$ has size at most $b$.
Let us start the reduction in the proof of Theorem~\ref{thm:CRISR-bigraph-TJ} with this restricted version.
Let $W_{i} = A_{i} \cup B_{i} \cup \bigcup_{1 \le j \le 2d} V_{2d(i-1)+j}$ for each $i$.
We can see that $|W_{i}| \le 2db + 2(d-2)$ and that each edge is either in $W_{i}$ for some $i$ or connecting $W_{i}$ and $W_{i+1}$, where $W_{L+1} = W_{1}$.
This implies that a cycle is a $(2db + 2(d-2))$-sketch of $G$, and thus $G$ has bandwidth at most $b_{d} := 2(2db + 2(d-2))$ by Lemma~\ref{lem:bw}.
\end{proof}

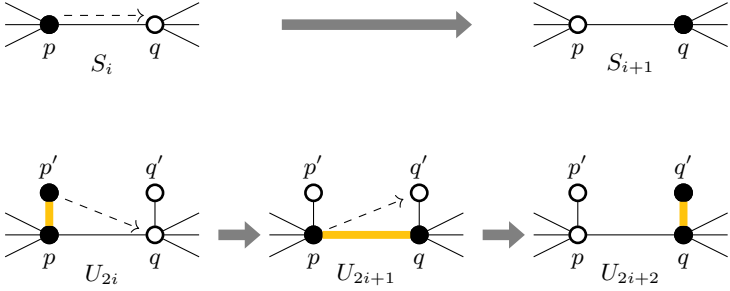
\begin{figure}[htb]
  \centering
  \def\biscale{0.7}
\newcommand{\MatchingBase}{
    \node at (0, 0) (pv1) {};
    \node at (0, 0.5) (pv2) {};
    \node at (0, 1) (pv3) {};

    \node[pq] at (1, 0.5) (p) {};
    \node at ($(p) - (0, 0.5)$) {$p$};
    \node[pq] at (3, 0.5) (q) {};
    \node at ($(q) - (0, 0.5)$) {$q$};

    \node at (4, 0) (qv1) {};
    \node at (4, 0.5) (qv2) {};
    \node at (4, 1) (qv3) {};

    \foreach \u/\v in {
            pv1/p, pv2/p, pv3/p,
            qv1/q, qv2/q, qv3/q,
            p/q%
        } {
            \draw (\u) edge (\v);
        }
}
\newcommand{\MatchingAuxBase}{
    \node at (p) {};
    \node[pq] at ($(p) + (0, 0.8)$) (pp) {};
    \node at ($(pp) + (0, 0.5)$) {$p'$};
    \node[pq] at ($(q) + (0, 0.8)$) (qp) {};
    \node at ($(qp) + (0, 0.5)$) {$q'$};
    \draw (p) edge (pp);
    \draw (q) edge (qp);
}
\begin{tikzpicture}[scale = \biscale,
        matching edge/.style = {draw=\iniClr, line width=3pt},
        pq/.style = {scale=\biscale, circle, line width=1pt, draw=black, fill=white},
        token/.style = {scale=\biscale, circle, line width=1pt, draw=black, fill=black},
    ]
    \def\xshiftAmount{5}
    \def\yshiftAmount{-4}

    \begin{scope}
        \MatchingBase
        \node at ($(p)!0.5!(q) - (0, .75)$) {$S_i$};
        \node[token] at (p) {};
        \draw[dashed, ->] ($(p) + (8pt, 5pt)$) to ($(q) + (-6pt, 5pt)$);
        \draw[-{Triangle[width=10pt,length=6pt]}, line width=4pt, gray]
        ($(qv2) + (1.4, 0)$) to ($(qv2) + (\xshiftAmount, 0)$);
    \end{scope}

    \begin{scope}[shift={(2*\xshiftAmount, 0)}]
        \MatchingBase
        \node at ($(p)!0.5!(q) - (0, .75)$) {$S_{i+1}$};
        \node[token] at (q) {};
    \end{scope}

    \begin{scope}[shift={(0*\xshiftAmount, \yshiftAmount)}]
        \MatchingBase
        \node at ($(p)!0.5!(q) - (0, .75)$) {$U_{2i}$};
        \MatchingAuxBase
        \node[token] at (p) {};
        \node[token] at (pp) {};
        \draw[matching edge] (p) to (pp);
        \draw[dashed, ->, shorten >= 3pt, shorten <= 3pt] (pp) edge (q);

        \draw[-{Triangle[width=10pt,length=6pt]}, line width=4pt, gray] ($(qv2) + (0.2, 0)$) to ($(qv2) + (1, 0)$);
    \end{scope}
    \begin{scope}[shift={(1*\xshiftAmount, \yshiftAmount)}]
        \MatchingBase
        \node at ($(p)!0.5!(q) - (0, .75)$) {$U_{2i+1}$};
        \MatchingAuxBase
        \node[token] at (p) {};
        \node[token] at (q) {};
        \draw[matching edge] (p) to (q);
        \draw[dashed, ->, shorten >= 3pt, shorten <= 3pt] (p) edge (qp);
        \draw[-{Triangle[width=10pt,length=6pt]}, line width=4pt, gray] ($(qv2) + (0.2, 0)$) to ($(qv2) + (1, 0)$);
    \end{scope}
    \begin{scope}[shift={(2*\xshiftAmount, \yshiftAmount)}]
        \MatchingBase
        \node at ($(p)!0.5!(q) - (0, .75)$) {$U_{2i+2}$};
        \MatchingAuxBase
        \node[token] at (q) {};
        \node[token] at (qp) {};
        \draw[matching edge] (q) to (qp);
    \end{scope}

\end{tikzpicture}
  \caption{Simulating a token sliding by two token jumps.}
  \label{fig:bimatching}
\end{figure}

Now we turn our attention back to bipartite graphs
and complete the map of complexity of $\RISR{d}$.
\begin{theorem}
\label{them:RISR-bigraph-TJ}
For every constant $d \ge 1$,
$\RISR{d}$ is PSPACE-complete on bipartite graphs under $\TJ$.
\end{theorem}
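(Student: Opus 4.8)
The plan is to treat $d \ge 2$ and $d = 1$ separately. For $d \ge 2$ the statement is immediate from Theorem~\ref{thm:CRISR-bigraph-TJ}: an instance of $\CRISR{d}$ is also an instance of $\RISR{d}$ with the same answer (the only difference being the promise that $\Uini$ and $\Utar$ are connected), so the PSPACE-hardness of $\CRISR{d}$ on bipartite graphs under $\TJ$ transfers verbatim. Hence the real work is the case $d = 1$, i.e., reconfiguration of induced matchings.

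For $d = 1$ I would reduce from $\RISR{0}$ under $\TS$ on bipartite graphs, which is PSPACE-complete~\cite{LokshtanovM19}. From an instance $\langle H, I, I' \rangle$ with $H$ bipartite, build $G$ by attaching to each $v \in V(H)$ a new pendant vertex $v'$ whose only neighbor is $v$; then $G$ is bipartite. Set $\Uini = \bigcup_{v \in I}\{v,v'\}$ and $\Utar = \bigcup_{v \in I'}\{v,v'\}$, which are $1$-regular sets of $G$. Two elementary facts are used throughout: the edges of $G[\setU]$ for a $1$-regular set $\setU$ are exactly the edges of the induced matching on $\setU$, and each such edge is either a pendant edge $ww'$ or an edge $uv$ of $H$; and, as in the proof of Lemma~\ref{lem:isomorphism}, a single $\TJ$ step replaces one endpoint of one matching edge while keeping that edge's partner fixed.

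For the forward direction, one $\TS$ step $I_t \onestep I_{t+1}$ of $H$ sliding a token from $u$ to an adjacent $v$ is simulated by two $\TJ$ steps of $G$ (cf.\ Figure~\ref{fig:bimatching}): starting from $\bigcup_{w \in I_t}\{w,w'\}$, first delete $u'$ and insert $v$, then delete $u$ and insert $v'$, arriving at $\bigcup_{w \in I_{t+1}}\{w,w'\}$; the intermediate set $\{u,v\} \cup \bigcup_{w \in I_t \setminus \{u\}}\{w,w'\}$ is an induced matching because $uv \in E(H)$ and $I_t, I_{t+1}$ are independent. Concatenating these two-step blocks turns a $\TS$-sequence between $I$ and $I'$ into a $\TJ$-sequence between $\Uini$ and $\Utar$.

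The core of the proof is the reverse direction. Given a $\TJ$-sequence $\langle M_0,\dots,M_q\rangle$ in $G$, since each step modifies only one matching edge we may index the $k := |I|$ matching edges consistently along the whole sequence. For the $j$-th matching edge I would track an $H$-vertex $p_j(i)$, its \emph{base}: $p_j(0)$ is the $H$-endpoint of the $j$-th (pendant) edge of $\Uini$; a step leaves $p_j$ unchanged unless it deletes the vertex currently equal to $p_j(i)$, in which case $p_j(i+1)$ is set to the surviving endpoint of the modified edge, which is then $H$-adjacent to $p_j(i)$. It then remains to verify: (i) at most one $p_j$ changes per step, and such a change is along an edge of $H$; and (ii) $\pi(M_i) := \{p_1(i),\dots,p_k(i)\}$ is always an independent set of $H$ of size $k$. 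Fact (ii) is the delicate point; it rests on the structural observation that if $T_i \subseteq V(H)$ is the set of $H$-vertices matched by $M_i$, then $H[T_i]$ is the disjoint union of the $H$-edges of $M_i$ together with isolated vertices (the $H$-endpoints of the pendant edges of $M_i$), so that the base selects one endpoint of each such edge and each such isolated vertex --- a set that is clearly independent in $H$. Given (i) and (ii), deleting consecutive repetitions from $\langle \pi(M_0),\dots,\pi(M_q)\rangle$ yields a $\TS$-sequence from $I$ to $I'$, completing the reduction. I expect the main obstacle to be making fact (ii) rigorous: pinning down the structure of the induced matchings of $G$ and checking that the base bookkeeping respects it, in particular through steps that convert one $H$-edge into another.
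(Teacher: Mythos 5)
Your proposal is correct, and its skeleton coincides with the paper's: the case $d \ge 2$ is discharged by Theorem~\ref{thm:CRISR-bigraph-TJ} exactly as you say, and for $d=1$ the paper uses the very same pendant-vertex reduction from $\RISR{0}$ under $\TS$ on bipartite graphs, with the identical two-jumps-per-slide simulation in the forward direction. The only genuine divergence is in the only-if direction. The paper avoids your dynamic bookkeeping by using a \emph{static} projection: fixing the bipartition $(A,B)$ of $H$, it maps each induced matching $\setU_i$ to the independent set consisting of the $H$-endpoints of its pendant edges together with the $A$-endpoints of its $H$-edges; consecutive projections then differ by at most one swap, but the swapped vertices need not be adjacent (e.g.\ when an $H$-edge $\{u,v\}$ with $u \in B$ turns into $\{u,w\}$), and this is patched by inserting one intermediate independent set through the common neighbor $u$, so one $\TJ$ step may cost two $\TS$ steps. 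Your base-tracking projection instead changes only when the base itself is jumped, and then moves along an $H$-edge, so each $\TJ$ step yields at most one $\TS$ step and no patching is needed; the ``delicate point'' you flag does go through, via the invariant that the base of an edge is always an $H$-vertex endpoint of that edge and, for a pendant edge $\{w,w'\}$, is always $w$ (this is maintained because a jump can never delete $w$ while keeping $w'$, as $w'$ has no other neighbor), after which independence of the base set follows from your structural observation that distinct edges of an induced matching have no $G$-edges, hence no $H$-edges, between them. So the two arguments buy slightly different things: the paper's projection is memoryless and needs no invariant but must argue about non-adjacent swaps, while yours needs the (easy) invariant but produces a cleaner step-by-step correspondence.
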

\begin{proof}
By Theorem~\ref{thm:CRISR-bigraph-TJ}, it suffices to show that 
$\RISR{1}$ on bipartite graphs is PSPACE-complete under $\TJ$.

We prove by a reduction from $\RISR{0}$ on bipartite graphs under $\TS$, which is PSPACE-complete~\cite{LokshtanovM19}.
Let $\langle H, S, T \rangle$ be an instance of $\RISR{0}$ under $\TS$, where $H$ is bipartite.
We obtain a graph $G$ from $H$ by attaching a pendant vertex $v'$ to each vertex $v$ of $H$.
Formally, we set $V(G) = V(H) \cup \{v' \mid v \in V(H)\}$ and $E(G) = E(H) \cup \{\{v,v'\} \mid v \in V(H)\}$.
We set $\Uini = S \cup \{v' \mid v \in S\}$ and $\Utar = T \cup \{v' \mid v \in T\}$.
This completes the construction. Note that $\Uini$ and $\Utar$ are $1$-regular sets of $G$.
We show that 
$\langle G, \Uini, \Utar \rangle$ is a yes-instance of $\RISR{1}$ under $\TJ$
if and only if
$\langle H, S, T \rangle$ is a yes-instance of $\RISR{0}$ under $\TS$.

To show the if direction,
assume that $\langle S_{0}, \dots, S_{\ell} \rangle$ is a $\TS$-sequence from $S$ to $T$.
For each $i$ with $0 \le i < \ell$, we set 
$\setU_{2i} = S_{i} \cup \{v' \mid v \in S_{i}\}$ and 
$\setU_{2i+1} = \setU_{2i} \setminus \{p'\} \cup \{q\}$,
where $S_{i} \setminus S_{i+1} = \{p\}$
and $S_{i+1} \setminus S_{i} = \{q\}$.
That is, when $S_{i+1}$ is obtained from $S_{i}$ by sliding a token from $p$ to a neighbor $q$,
we obtain $\setU_{2i+2}$ from $\setU_{2i}$ by first jumping the token from $p'$ to $q$ (we obtain $\setU_{2i+1}$),
and then the other token from $p$ to $q'$. See \figurename~\ref{fig:bimatching}.
Since $p,q$ are adjacent in $H$, the sets $\setU_{2i}$, $\setU_{2i+1}$, and $\setU_{2i+2}$ are $1$-regular in $G$.
Therefore, $\langle \setU_{0}, \dots, \setU_{2\ell} \rangle$ is a $\TJ$-sequence from $\setU_{0} = \Uini$ and $\setU_{2\ell} = \Utar$.

To show the only-if direction,
assume that $\langle \setU_{0}, \dots, \setU_{\ell} \rangle$ is a $\TJ$-sequence from $\Uini$ to $\Utar$.
Let $(A,B)$ be a partition of $V(H)$ into independent sets of $H$.
From each $\setU_{i}$, we define
\[
  R_{i} = \{v \in V(H) \mid \{v, v'\} \subseteq \setU_{i}\} \cup \{u \in A \mid \{u,v\} \subseteq \setU_{i}, \; \{u,v\} \in E(H)\}.
\]
That is, $R_{i}$ is obtained by projecting $\setU_{i}$ onto $V(H)$
and then further replacing two adjacent vertices in $\setU_{i} \cap V(H)$ with the one in $A$.
Clearly, each $R_{i}$ is an independent set of $H$ with size $|\setU_{i}|/2 = |S|$. In particular, $R_{0} = S$ and $R_{\ell} = T$.
Hence, it suffices to show that there is a $\TS$-sequence from $R_{i}$ to $R_{i+1}$ for $0 \le i < \ell$.
By the definition of $R_{i}$, the size of the symmetric difference of $R_{i}$ and $R_{i+1}$ is at most two.
Assume that $R_{i} \ne R_{i+1}$ and that $u \in R_{i} \setminus R_{i+1}$ and $v \in R_{i+1} \setminus R_{i}$ are not adjacent,
since otherwise we are done.
Now there must be a common neighbor $w \notin R_{i} \cup R_{i+1}$ of $u$ and $v$
such that $R_{i} \setminus \{u\} \cup \{w\} = R_{i+1} \setminus \{v\} \cup \{w\}$ is an independent set.
That is, $\langle R_{i}, R_{i} \setminus \{u\} \cup \{w\}, R_{i+1} \rangle$ is a $\TS$-sequence.
\end{proof}

\section{Conclusion}
In this paper, we have investigated the computational complexity of \textsc{$d$-Regular Induced Subgraph Reconfiguration} ($\RISR{d}$) and \textsc{Connected $d$-Regular Induced Subgraph Reconfiguration} ($\CRISR{d}$).
We have shown that $\RISR{d}$ is PSPACE-complete for any fixed $d \ge 1$ even on chordal graphs and on bipartite graphs under two well-studied reconfiguration rules, Token Jumping and Token Sliding, except for some trivial cases.
The results give interesting contrasts to known results for $d = 0$, namely \textsc{Independent Set Reconfiguration}.
On chordal graphs, the two reconfiguration rules do not make a difference in the complexity of $\RISR{d}$ for $d \ge 1$, whereas they do make a significant difference for $d = 0$, and on bipartite graphs, the complexity for $d \ge 1$ shows a reverse phenomenon from $d = 0$.
For any fixed $d \ge 2$, $\CRISR{d}$ on bipartite graphs is PSPACE-complete under Token Sliding rule and polynomial-time solvable under Token Jumping rule.

%
% ---- Bibliography ----
%
% BibTeX users should specify bibliography style 'splncs04'.
% References will then be sorted and formatted in the correct style.
%
\bibliography{main}

\begin{thebibliography}{10}
\providecommand{\url}[1]{\texttt{#1}}
\providecommand{\urlprefix}{URL }
\providecommand{\doi}[1]{https://doi.org/#1}

\bibitem{AsahiroEIM14}
Asahiro, Y., Eto, H., Ito, T., Miyano, E.: Complexity of finding maximum
  regular induced subgraphs with prescribed degree. Theor. Comput. Sci.
  \textbf{550},  21--35 (2014). \doi{10.1016/j.tcs.2014.07.008}

\bibitem{BartierBDLM21}
Bartier, V., Bousquet, N., Dallard, C., Lomer, K., Mouawad, A.E.: On girth and
  the parameterized complexity of token sliding and token jumping. Algorithmica
   \textbf{83}(9),  2914--2951 (2021). \doi{10.1007/s00453-021-00848-1}

\bibitem{BelmonteKLMOS21}
Belmonte, R., Kim, E.J., Lampis, M., Mitsou, V., Otachi, Y., Sikora, F.: Token
  sliding on split graphs. Theory Comput. Syst.  \textbf{65}(4),  662--686
  (2021). \doi{10.1007/s00224-020-09967-8}

\bibitem{Bonsma13}
Bonsma, P.S.: The complexity of rerouting shortest paths. Theor. Comput. Sci.
  \textbf{510},  1--12 (2013). \doi{10.1016/j.tcs.2013.09.012}

\bibitem{Bonsma16}
Bonsma, P.S.: Independent set reconfiguration in cographs and their
  generalizations. J. Graph Theory  \textbf{83}(2),  164--195 (2016).
  \doi{10.1002/jgt.21992}

\bibitem{BonsmaKW14}
Bonsma, P.S., Kaminski, M., Wrochna, M.: Reconfiguring independent sets in
  claw-free graphs. In: Ravi, R., G{\o}rtz, I.L. (eds.) Algorithm Theory -
  {SWAT} 2014 - 14th Scandinavian Symposium and Workshops, Copenhagen, Denmark,
  July 2-4, 2014. Proceedings. Lecture Notes in Computer Science, vol.~8503,
  pp. 86--97. Springer (2014). \doi{10.1007/978-3-319-08404-6\_8}

\bibitem{HanakaIMMNSSV20}
Hanaka, T., Ito, T., Mizuta, H., Moore, B., Nishimura, N., Subramanya, V.,
  Suzuki, A., Vaidyanathan, K.: Reconfiguring spanning and induced subgraphs.
  Theor. Comput. Sci.  \textbf{806},  553--566 (2020).
  \doi{10.1016/j.tcs.2019.09.018}

\bibitem{vandenHeuvel13}
van~den Heuvel, J.: The complexity of change. In: Surveys in Combinatorics
  2013, pp. 127--160 (2013). \doi{10.1017/CBO9781139506748.005}

\bibitem{IDHPSUU11}
Ito, T., Demaine, E.D., Harvey, N.J.A., Papadimitriou, C.H., Sideri, M.,
  Uehara, R., Uno, Y.: On the complexity of reconfiguration problems. Theor.
  Comput. Sci.  \textbf{412}(12-14),  1054--1065 (2011).
  \doi{10.1016/j.tcs.2010.12.005}

\bibitem{ItoKOSUY20}
Ito, T., Kaminski, M.J., Ono, H., Suzuki, A., Uehara, R., Yamanaka, K.:
  Parameterized complexity of independent set reconfiguration problems.
  Discret. Appl. Math.  \textbf{283},  336--345 (2020).
  \doi{10.1016/j.dam.2020.01.022}

\bibitem{ItoOO15}
Ito, T., Ono, H., Otachi, Y.: Reconfiguration of cliques in a graph. In: {TAMC}
  2015. Lecture Notes in Computer Science, vol.~9076, pp. 212--223 (2015).
  \doi{10.1007/978-3-319-17142-5\_19}

\bibitem{KaminskiMM12}
Kami\'{n}ski, M., Medvedev, P., Milani\v{c}, M.: Complexity of independent set
  reconfigurability problems. Theor. Comput. Sci.  \textbf{439},  9--15 (2012).
  \doi{10.1016/j.tcs.2012.03.004}

\bibitem{LokshtanovM19}
Lokshtanov, D., Mouawad, A.E.: The complexity of independent set
  reconfiguration on bipartite graphs. {ACM} Trans. Algorithms  \textbf{15}(1),
   7:1--7:19 (2019). \doi{10.1145/3280825}

\bibitem{Muhlenthaler15}
M{\"{u}}hlenthaler, M.: Degree-constrained subgraph reconfiguration is in {P}.
  In: {MFCS}. Lecture Notes in Computer Science, vol.~9235, pp. 505--516.
  Springer (2015). \doi{10.1007/978-3-662-48054-0\_42}

\bibitem{Nishimura18}
Nishimura, N.: Introduction to reconfiguration. Algorithms  \textbf{11}(4), ~52
  (2018). \doi{10.3390/a11040052}

\bibitem{SorgeW19}
Sorge, M., Weller, M.: The graph parameter hierarchy (2019),
  \url{https://manyu.pro/assets/parameter-hierarchy.pdf}

\bibitem{Wrochna18}
Wrochna, M.: Reconfiguration in bounded bandwidth and tree-depth. J. Comput.
  Syst. Sci.  \textbf{93},  1--10 (2018). \doi{10.1016/j.jcss.2017.11.003}

\end{thebibliography}
\bibliographystyle{splncs04}

\end{document}